\g@addto@macro{\UrlBreaks}{\UrlOrds}
\let\oldsqrt\sqrt
\def\sqrt{\mathpalette\DHLhksqrt}
\def\DHLhksqrt#1#2{%
\setbox0=\hbox{$#1\oldsqrt{#2\,}$}\dimen0=\ht0
\advance\dimen0-0.2\ht0
\setbox2=\hbox{\vrule height\ht0 depth -\dimen0}%
{\box0\lower0.4pt\box2}}
\newcommand{\tr}{\operatorname{tr}}
\newcommand{\iu}{{i\mkern1mu}}
\newcommand*\xbar[1]{%
   \hbox{%
     \vbox{%
       \hrule height 0.5pt 
       \kern0.5ex
       \hbox{%
         \kern-0.2em
         \ensuremath{#1}%
         \kern-0.0em
       }%
     }%
   }%
} 
\newcommand{\Bcal}{\mathcal{B}}
\newcommand{\Ccal}{\mathcal{C}}
\newcommand{\Hcal}{\mathcal{H}}
\newcommand{\Ical}{\mathcal{I}}
\newcommand{\Mcal}{\mathcal{M}}
\newcommand{\Ucal}{\mathcal{U}}
\newcommand{\Jcal}{\mathcal{J}}
\newcommand{\Pprob}{\mathbb{P}}
\newcommand{\ident}{\mathbbm{1}}
\newcommand{\eq}{ Eq.}
\newcommand{\Ads}{\mathbbm{A}}
\newcommand{\Bds}{\mathbbm{B}}
\newcommand{\Xds}{\mathbbm{X}}
\def\BraVert{\egroup\,\mid\,\bgroup}
\def\ketbra#1#2{\ket{#1\vphantom{#2}}\!\bra{#2\vphantom{#1}}}
\def\bra#1{\mathinner{\langle{#1}|}}
\def\ket#1{\mathinner{|{#1}\rangle}}
\def\braket#1{\mathinner{\langle{#1}\rangle}}
\newtheorem*{theorem*}{Thm.}
\newtheorem{observation}{Observation}
\newtheorem*{conjecture*}{Conjecture}
\newtheorem*{corollary*}{Corollary}
\newcommand{\neutralize}[1]{\expandafter\let\csname c@#1\endcsname\count@}
\newtheorem{thm}{Thm.}
\begin{document}

\title{Quantum chicken-egg dilemmas: Delayed-choice causal order and non-classical correlations}

\author{Simon Milz}
\email{simon.milz@oeaw.ac.at} 
\affiliation{Institute for Quantum Optics and Quantum Information, Austrian Academy of Sciences, Boltzmanngasse 3, 1090 Vienna, Austria}
\affiliation{School of Physics and Astronomy, Monash University, Clayton, Victoria 3800, Australia}

\author{Dominic Jurkschat}
\affiliation{School of Physics and Astronomy, Monash University, Clayton, Victoria 3800, Australia}

\author{Felix A. Pollock}
\affiliation{School of Physics and Astronomy, Monash University, Clayton, Victoria 3800, Australia}

\author{Kavan Modi}
\affiliation{School of Physics and Astronomy, Monash University, Clayton, Victoria 3800, Australia}

\begin{abstract}
Recent frameworks describing quantum mechanics in the absence of a global causal order admit the existence of causally indefinite processes, where it is impossible to ascribe causal order for events $A$ and $B$. These frameworks even allow for processes that violate the so-called causal inequalities, which are analogous to Bell's inequalities. However, the physicality of these exotic processes is, in the general case, still under debate, bringing into question their foundational relevance. While it is known that causally indefinite processes can be probabilistically realised by means of a quantum circuit, along with an additional conditioning event $C$, concrete insights into the ontological meaning of such implementation schemes have heretofore been limited. Here, we show that causally indefinite processes can be realised with schemes where $C$ serves only as a classical flag heralding which causally indefinite process was realised. We then show that there are processes where \textit{any} pure conditioning measurement of $C$ leads to a causally indefinite process for $A$ and $B$, thus establishing causal indefiniteness as a basis-\textit{independent} quantity. Finally, we demonstrate that quantum mechanics allows for phenomena where $C$ can \textit{deterministically} decide whether $A$ comes before $B$ or vice versa, without signalling to either. This is akin to Wheeler's famous delayed-choice experiment establishing definite causal order in quantum mechanics as instrument-\textit{dependent} property. 
\end{abstract}

\date{\today}
\maketitle

\section{Introduction}
\label{sec::Intro}

Genuine quantum properties, like entanglement and coherence play an important role in many protocols and current or near future technologies~\cite{preskill_quantum_2018}. While these \textit{spatial} properties of quantum systems, and their resourcefulness have been studied in depth, much less is known about their temporal counterparts. Recent research has begun investigating the structure of temporal correlations of quantum systems~\cite{hoffmann_structure_2018, mao_structure_2020} as well as the quantification of quantum resources required to simulate temporal correlations~\cite{spee_simulating_2020}. While this program is in its early stages, the foundational importance of temporal (quantum) correlation is becoming clear. For instance, it has been demonstrated that temporal quantum correlations can enhance the performance of ticking clocks~\cite{budroni_nonclassical_2020}. The counterpart to no-signalling conditions, which play a crucial role in studies of spatial correlations, are conditions imposing causality. However, even when subject to these conditions, quantum mechanics yields surprises; within the field of quantum causal modelling~\cite{1367-2630-18-6-063032, allen_quantum_2017}, it has been shown that quantum mechanics allows for the superposition of common-cause and direct-cause causal structures~\cite{feix_quantum_2017, maclean_quantum-coherent_2017} as well as the violation of instrumental tests~\cite{chaves_quantum_2018} -- two feats that are not possible within the realm of classical causal models. Additionally, quantum mechanics can provide a speed-up in the discovery of causal relations~\cite{ried_quantum_2015, chiribella_quantum_2019}.

This is just the tip of the quantum iceberg; processes that are \textit{causally ordered} form only a subset of those allowed by quantum theory. The possibility to coherently control causal orders has drawn considerable recent interest, both on the theoretical~\cite{chiribella_quantum_2013,ebler_enhanced_2018,chiribella_indefinite_2018,salek_quantum_2018, procopio_communication_2019, zych_bells_2019, chiribella_quantum_2019a, wilson_quantum_2020,guerin_communication_2019,abbott_communication_2020, wilson_diagrammatic_2020, mukhopadhyay_superposition_2020, kristjansson_resource_2020}, as well the experimental~\cite{procopio_experimental_2015,rubino_experimental_2017,goswami_indefinite_2018, wei_experimental_2019, goswami_increasing_2020, guo_experimental_2020, taddei_experimental_2020, rubino_experimental_2020} side, and such control has been shown to be a resource in information theoretic tasks~\cite{feix_quantum_2015,guerin_exponential_2016,ebler_enhanced_2018,taddei_quantum_2019}. Going further, Ref.~\cite{OreshkovETAL2012} showed the existence of processes that are locally causal, but do not have a global causal order. Moreover, there it was shown that such processes allow for richer communication tasks than those with global causal order.

Specifically, the authors of Ref.~\cite{OreshkovETAL2012} constructed a so-called \textit{causal inequality}, which is reminiscent of Bell's inequalities and showed that quantum mechanics allows for processes that violate them, i.e., outperform causally ordered processes (classical, quantum, or beyond) in information theoretic games~\cite{OreshkovETAL2012, branciard_simplest_2015}. Further stratifying the structure of such causally indefinite processes, it has been demonstrated that there are causally non-separable processes, i.e., processes that cannot be represented as a convex mixture of causally ordered ones that do not violate causal inequalities~\cite{araujo_witnessing_2015, feix_causally_2016, araujo_purification_2017} and thus are reminiscent of entangled states that do not violate Bell's inequalities~\cite{werner_quantum_1989}. On the other hand, it has been shown that, beyond the two-party case, there are fully classical processes that violate causal inequalities~\cite{baumeler_maximal_2014, baumeler_perfect_2014a}. 

While such exotic causal structures are not \textit{a priori} prohibited by fundamental laws of physics~\footnote{See Ref.~\cite{araujo_purification_2017} for an investigation of processes under the requirement of \textit{purifiability}.}, their physicality, along with their implications, remains uncertain. In addition, and in stark contrast to otherwise spatially analogous entanglement, it is generally not clear how to experimentally implement causally indefinite processes \textit{deterministically}. However, \textit{probabilistic} protocols for realising an arbitrary process by means of a quantum circuit, i.e., a causally ordered process, with conditioning have been proposed~\cite{1367-2630-18-7-073037, silva_connecting_2017, araujo_quantum_2017, milz_entanglement_2018}, and the interconversion between properties of the employed circuit and the conditional causally indefinite process has been investigated~\cite{milz_entanglement_2018}.

Remarkably, as we discuss in this paper, no quantum correlations are required to realise causally indefinite processes via conditioning. Specifically, within the probabilistic implementation scheme of Ref.~\cite{milz_entanglement_2018}, for any process, there exists a quantum circuit which only displays classical correlations between the conditioning degrees of freedom and the remaining degrees of freedom of interest. This absence of quantum correlations allows for the interpretation that each measurement outcome on the conditioning system merely reveals -- but does not create -- the causally non-separable processes that was `realised' in the individual run, and establishes causality as a principle that holds on average, but not necessarily for individual runs of an experiment.

\begin{figure}
    \centering
    \includegraphics[width=0.78\linewidth]{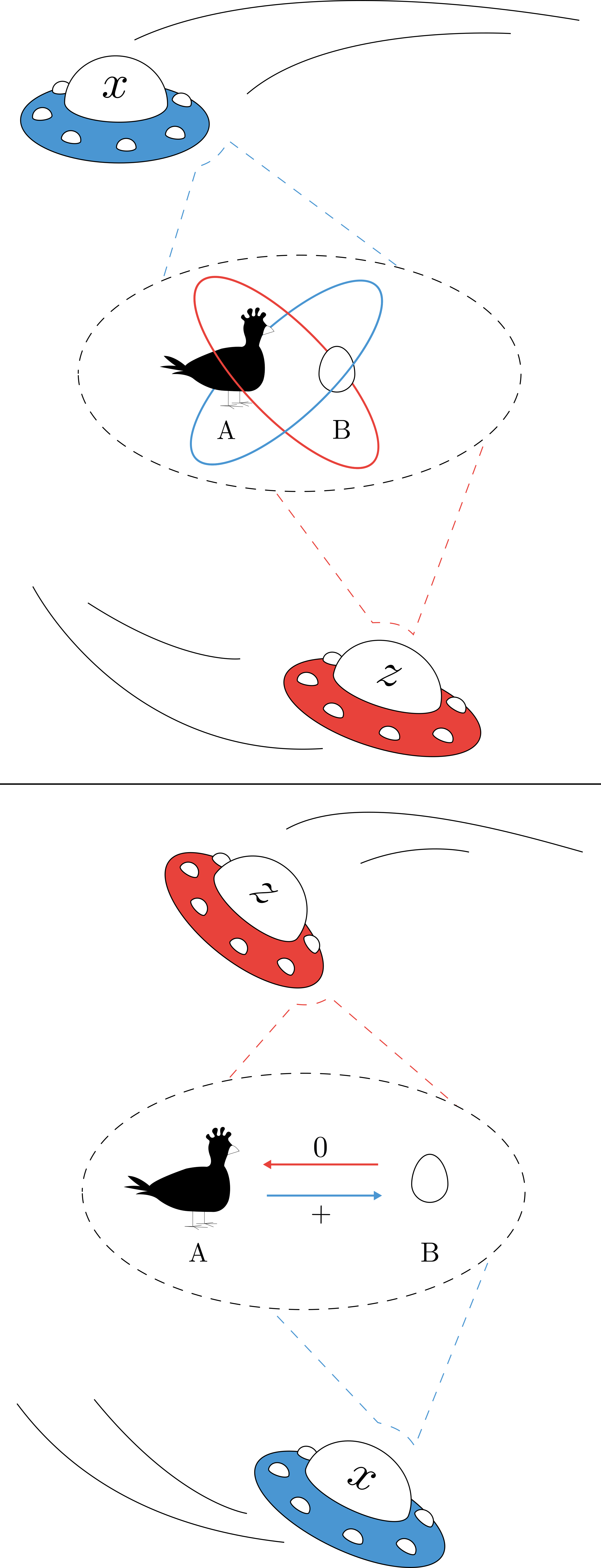}
    \caption{\textbf{Basis-independent causal disorder (top).} The conditioning basis of an observer depends on their orientation with respect to a fixed direction. For example, in either of the depicted cases the observer might condition in their respective Pauli-$z$ basis, yet with respect to a fixed reference, these measurements amount to measurements in the Pauli-$x$ (depicted in blue) and Pauli-$z$ (depicted in red) basis, respectively. As we show in Sec.~\ref{sec::CausalCoherence} there are processes for which the causal non-separability of the observed processes is independent of the conditioning basis. \textbf{Basis-dependent causal order (bottom).} Depending on the basis the observer measures in, the resulting conditional process is of order $A\prec B$ (for the blue case) or of order $B\prec A$ (for the red case). See Sec.~\ref{sec::CondCausOrder} for details. For simplicity, the conditioning system is omitted in the figure.}
    \label{fig::Comic}
\end{figure}

While this latter interpretation has the obvious objection that the causal ordering of an individual run of an experiment is not a meaningful notion \textit{per se}, it nonetheless raises the question of whether entanglement between the conditioning degrees of freedom and the rest is possible and/or enhances the conditioning scheme. This question is in the spirit of those regarding the resource that is used in the aforementioned studies of coherent control of causal orders; there, it is the entanglement between the relevant degrees of freedom and a control qubit that is crucial for all observed advantages (losing this qubit destroys the respective enhancements). In a similar vein, we show here that such coherent control can be used to make the conditioning procedures inherently `stable'. Specifically, the properties of the conditional processes crucially depend on the choice of measurement basis that is employed for the conditioning; we demonstrate that the range of conditioning bases that lead to causally non-separable processes can be vastly increased when entanglement is added, and that there are indeed causally ordered processes that lead to a causally non-separable process for \textit{any} conditioning basis. Such causally ordered processes, then, make causal non-separability an effect that stems from conditioning in a highly fine-tuned manner (as conditioning on most causally ordered processes will not yield a proper process~\cite{milz_entanglement_2018}), but renders it a property that is independent of how the conditioning apparatus is oriented with respect to the laboratories of Alice and Bob. Consequently, we shall call this property \textit{basis-independent} (see Fig.~\ref{fig::Comic} for a graphical representation).

Our first set of results establishes a connection between correlations and the properties of processes, and shows that entanglement can enhance conditioning scenarios, but is not a prerequisite for the realisation of causally indefinite processes. Our next result is even more surprising: We find physical processes where the conditioning party can choose the causal direction between two events, despite occurring after them. That is, we demonstrate that causal order itself can be understood as a basis-\textit{dependent} property; if the conditioning measurements are made in one basis, then $A$ occurs before $B$, but if they are made in another, then $B$ occurs before $A$. Importantly, as we show by explicit example, this basis-dependence occurs \textit{deterministically}; the respective basis choice fixes which of the opposing causal orders the processes had. In contrast to other scenarios considered in the literature~\cite{castro-ruiz_dynamics_2018}, here, the operation that determines the causal order between Alice and Bob happens \textit{after} their operations are performed, thus allowing one to choose causal order after the fact, instead of predetermining it.

This is akin to the famous delayed-choice experiment~\cite{wheeler, peruzzo, PhysRevLett.120.190401}, and we emphasise that this contextual behaviour is genuinely quantum and -- as we show -- cannot exist in the classical world. Put less prosaically, in quantum mechanics, the chicken-egg dilemma fundamentally has no resolution -- even when the underlying process is causally ordered -- but one's conclusion depends on how one `looks' at the process at hand (see Fig.~\ref{fig::Comic} for a graphical representation). On the other hand, as we show, with only one conditioning measurement (with two outcomes), it is \textit{not} possible to obtain processes of opposing causal order, mirroring similar results in the unconditional case, where -- in many simple cases -- it is impossible to superpose opposing causal orders~\cite{yokojima_consequences_2020, costa_no-go_2020}.

Before presenting these results, we begin by introducing the process matrix formalism, which is designed to represent spatio-temporal processes, including the those that do have a definite causal order.

\section{Process Matrix Preliminaries}

\subsection{General framework}
Throughout this article, we focus on two parties, Alice ($A$) and Bob ($B$), who perform generalized measurements in their distinct laboratories. We are interested in the joint probabilities they can possibly obtain when each of them employs an instrument $\Jcal_X = \{\Mcal_X^{(k)}\}_k$, with $X\in\{A,B\}$. An instrument is a collection of completely positive (CP) maps $\Mcal_X^{(k)}$, each describing the transformation on the observed system corresponding to one of a possible set of measurement outcomes. Moreover, the CP maps add up to a CP trace preserving (CPTP) map $\Mcal_X = \sum_k \Mcal_X^{(k)}$. Each of the CP maps $\Mcal_X^{(k)}$ transforms the quantum states from an input space $\Bcal(\Hcal_{X_{I}})$ to an output space $\Bcal(\Hcal_{X_O})$, i.e., $\Mcal_X^{(k)}: \Bcal(\Hcal_{X_I}) \rightarrow \Bcal(\Hcal_{X_O})$, where $\Hcal_{X_{I/O}}$ are the respective system Hilbert spaces, and $\Bcal(\Hcal_{X_{I/O}})$ denotes the set of matrices on said Hilbert space. Throughout, the dimension of the involved Hilbert spaces is considered to be finite and $d_X$ is the dimension of $\Hcal_X$.

For ease of notation, we employ the Choi-Jamio\l kowski isomorphism~\cite{jamiolkowski_linear_1972, Choi1975} to express all objects we consider as positive matrices. With this, every CP map $\Mcal_X^{(k)}:\Bcal(\Hcal_{X_I}) \rightarrow \Bcal(\Hcal_{X_O})$ corresponds to a positive matrix $M_X^{(k)}\in \Bcal(\Hcal_{X_O}\otimes \Hcal_{X_I})$, and every CPTP map $\Mcal_X$ corresponds to a positive matrix that additionally satisfies $\tr_{X_O}M_X = \ident_{X_I}$, where $\ident_{X_I}$ is the identity matrix on $\Hcal_{X_I}$.

In such a setting, owing to the linearity of quantum mechanics (in the sense that its statistics have to satisfy linearity of mixing), the joint probability for Alice and Bob to obtain outcomes $i$ and $j$, given that they used instruments $\Jcal_A$ and $\Jcal_B$, can then be computed via an equation of the form
\begin{gather}
\label{eqn::FirstW}
    \Pprob(i,j|\Jcal_A,\Jcal_B) = \tr[W(M_A^{(i)} \otimes M_B^{(j)})]\, ,
\end{gather}
where $W \in \Bcal(\Hcal_{A_0} \otimes \Hcal_{A_I} \otimes \Hcal_{B_O} \otimes \Hcal_{B_I})$ is called the \textit{process matrix}~\cite{OreshkovETAL2012} that encapsulates the spatio-temporal relations between $A$ and $B$. It accounts for the cases where Alice and Bob are causally connected, e.g., where Alice's operations can influence Bob's. In addition, it also captures the case where their causal order is indefinite.

Consequently, Eq.~\eqref{eqn::FirstW} has been dubbed the Born rule for temporal processes~\cite{chiribella_memory_2008, shrapnel_updating_2017}. The process matrix and its action are graphically depicted in Fig.~\ref{fig::Conditioning_Circuit}. Importantly, it contains all spatio-temporal correlations that are present between Alice and Bob. For example, as mentioned, $W$ can describe all conceivable scenarios where Alice's operations come before Bob's (denoted by $W^{A\prec B}$), Bob's operations come before Alice's (denoted by $W^{B\prec A}$), as well as situations, where Alice and Bob are spacelike separated (denoted by $W^{A\|B}$).

Following the literature, we will often call process matrices that display a definite causal order \textit{quantum combs}, or just combs~\cite{chiribella_quantum_2008,chiribella_theoretical_2009}. Any process matrix $W$ that can be represented as a probabilistic mixture of causally ordered processes, i.e., 
\begin{gather}
\label{eqn::CausSep}
    W = q W^{A\prec B} + (1-q) W^{B\prec A}\, ,
\end{gather}
is called \textit{causally separable}~\cite{OreshkovETAL2012}. The case $W^{A\|B}$ can be understood as a special case of $W^{B\prec A}$ or $W^{A\prec B}$ in\eq~\eqref{eqn::CausSep}. Here, causal order implies that a later choice of instrument cannot influence statistics at an earlier point in time. It has been shown~\cite{chiribella_quantum_2008,chiribella_theoretical_2009} that, for the two-party case we consider, this requirement implies 
\begin{gather}
\begin{split}\label{eqn::Causality_constraints}
    W^{X\prec Y}_{X Y} &= \ident_{Y_O} \otimes W^{X\prec Y}_{X Y_I} \\
    \text{and} \ \ \tr_{Y_I} W^{X\prec Y}_{XY_I} &= \ident_{X_O} \otimes \rho_{X_I}\, ,
\end{split}
\end{gather}
where $\rho_{X_I}$ is a quantum state, and we have added subscripts to signify which spaces the respective elements are defined on. For compactness, we will often employ the convention $X_IX_O := X$ when denoting spaces by subscripts. 

As a process $W^{A\|B}$ both satisfies $A\prec B$ \textit{and} $B\prec A$, the above conditions imply that 
\begin{gather}
    W^{A\|B} = \ident_{A_OB_O} \otimes \rho_{A_IB_I}\, . 
\end{gather}
Naturally, independently of what CPTP map Alice (Bob) performs, the `remaining' comb on Bob's (Alice's) side has to be causally ordered. We will call this property \textit{local} causality. Importantly, requiring that $W$ does not violate \textit{local} causality (in each of the respective laboratories A and B) does \textit{not} force it to abide by a fixed \textit{global} causal order (nor a convex combination of fixed causal orders)~\cite{oreshkov_causal_2016}. Specifically, local causality imposes the constraint 
\begin{gather}
\label{eqn::Def_W}
    \tr[W(M_{A} \otimes M_{B})] = 1 \quad \forall \ \text{CPTP maps} \ M_{A}, M_{B}, 
\end{gather}
and there exist process matrices, dubbed \textit{causally non-separable}, that satisfy\eq~\eqref{eqn::Def_W} but which cannot be represented as a probabilistic mixture of the form of\eq~\eqref{eqn::CausSep}. Additionally, there are process matrices that can violate \textit{causal inequalities}~\cite{OreshkovETAL2012,branciard_simplest_2015}; i.e., their causal indefiniteness can be verified in a device-independent way. It has been shown that not every causally non-separable process matrix can violate a causal inequality~\cite{araujo_witnessing_2015, feix_causally_2016, araujo_purification_2017}, implying the existence of causally non-separable process matrices that admit a causal model. This is analogous to the spatial setting, where there are entangled states that cannot violate any Bell inequality, and which admit a hidden variable model~\cite{werner_quantum_1989}. In what follows, we will also call processes that lack a clear causal order -- either in the weaker sense of causal non-separability, or in the stronger sense that they can violate a causal inequality -- \textit{causally indefinite}.

\subsection{Process matrices via conditioning}
Processes with a fixed causal order can always be understood as coming from a quantum circuit with a pure initial state and unitary dynamics~\cite{chiribella_theoretical_2009}. Causally separable processes, then, can be seen as a convex mixture of such circuits, e.g., beginning with a coin flip that decides which of the circuits is run. However, there is no such circuit dilation for causally non-separable processes~\footnote{While there are experimental implementations of the quantum switch~\cite{procopio_experimental_2015, goswami_indefinite_2018, wei_experimental_2019,procopio_communication_2019, taddei_experimental_2020}, it does not possess a representation in terms of a circuit where each of the laboratories occurs only once~\cite{chiribella_quantum_2013}.}.

On the other hand, it has been shown that \textit{any} process matrix~\footnote{In general, however, not all matrices obtained via conditioning will automatically satisfy the conditions required for a process matrix. Thus, not all conditional matrices are proper process matrices.}, causally non-separable or not, can be realised by means of a causally ordered process with an additional conditioning~\cite{chiribella_theoretical_2009, 1367-2630-18-7-073037,silva_connecting_2017, araujo_quantum_2017, milz_entanglement_2018}. To this end, we now introduce the third cast member of this paper, Charlie ($C$), who will be responsible for the conditioning. For example, the ordering of the overall process could be taken to be $A\prec B \prec C$, where the conditioning occurs in Charlie's laboratory (corresponding to a measurement of the degrees of freedom denoted by $C_I$). Then, for every process matrix $W \in \Bcal(\Hcal_A\otimes \Hcal_B)$, there exists a causally ordered $\Upsilon^{A\prec B\prec C} \in \Bcal(\Hcal_{A}\otimes \Hcal_{B} \otimes \Hcal_{C_I})$ such that
\begin{gather}
    \begin{split}
     \label{eqn::Equality_Prob}
    &\Pprob(i,j|\Jcal_A,\Jcal_B) = \tr[W(M_A^{(i)} \otimes M_B^{(j)})] \\
    &= \tfrac{1}{p_C(0)} \tr[\Upsilon^{A\prec B \prec C} (M_A^{(i)} \otimes M_B^{(j)} \otimes \ketbra{0}{0}_{C_I})]
    \end{split}
\end{gather}
holds for all $\{M_A^{(i)}, M_B^{(j)}\}$, where $p_C(0)$ is the probability to obtain outcome $0$ when measuring the system $C_I$ in the computational basis. 

We emphasize that, in principle, \textit{every} positive matrix $W$, proper process matrix or not, could be `realised' in the above way. However, proper processes, i.e., positive matrices $W$ that satisfy Eq.~\eqref{eqn::Def_W} are singled out in the sense that they form the largest set of positive matrices for which the success probability $p_C(0)$ is independent of the instruments $\Jcal_A$ and $\Jcal_B$~\cite{silva_connecting_2017, milz_entanglement_2018}. In this sense, proper processes are the only ones for which the above conditioning rule is linear~\cite{silva_connecting_2017} and the conditioning procedure is  fully independent of Alice's and Bob's actions. As this fails to hold for matrices that are not proper processes, their realisation via conditioning is somewhat ill-defined (or at least non-linear, as they require a renormalization that depends on what instruments Alice and Bob employ). Here, and in what follows, we will denote the comb corresponding to the overall circuit by $\Upsilon \in \Bcal(\Hcal_A\otimes \Hcal_B \otimes \Hcal_{C_I})$ to distinguish it from the realised process matrices (denoted by $W\in\Bcal(\Hcal_A\otimes \Hcal_B)$).

In line with the aforementioned causality requirements, a causally ordered process matrix as the one employed above, satisfies
\begin{gather}
\label{eqn::Causality3}
    \tr_{C_I}\Upsilon^{X\prec Y\prec C}_{XYC_I} = \ident_{Y_0} \otimes W^{X\prec Y}_{XY_I}\,,
\end{gather}
where $W^{X\prec Y}_{XY_I}$ obeys the causality constraints~\eqref{eqn::Causality_constraints}. Unsurprisingly then, the resulting process matrix on $XY$ is causally ordered if \textit{no} conditioning takes place on $C_I$ (i.e., the degrees of freedom $C_I$ are traced out). Put differently, denoting the process matrix obtained from conditioning on the outcome $i$ on $C_I$ by $W^{(i)}$, we see that $\sum_{i}p_C(i)W^{(i)}$ is causally ordered. Consequently, being in possession of the system $C_I$ is a crucial control resource for realising causally non-separable process matrices. 

In Ref.~\cite{milz_entanglement_2018}, an overall circuit -- shown in Fig.~\ref{fig::Conditioning_Circuit}, and henceforth referred to as `parallel' -- for the realisation of arbitrary process matrices, requiring two initial maximally entangled states, a qubit ancillary degree of freedom and a five-partite unitary, was provided. As Alice and Bob cannot signal to each other in this circuit, while Charlie comes after both of them, in the above convention, its causal order is of the form $A \| B\prec C_I$.

Following the notation of Fig.~\ref{fig::Conditioning_Circuit}, for every process matrix $W\in\Bcal(\Hcal_A \otimes \Hcal_B)$, there exists a unitary map $\Ucal$ acting on $A_O,A_I', C_I', B_I',B_O$, such that 
\begin{gather}
\begin{split}
    &\tr[W(M_A^{(i)} \otimes M_B^{(j)})]\\
    &= \tfrac{1}{p_C(0)} \tr \{\ketbra{0}{0}_{C_I} \Ucal\circ \Mcal_A^{(i)} \otimes \Mcal_B^{(j)} [\rho_{\Ads_I\Bds_IC_I'}]\} \\
    &= \tfrac{1}{p_C(0)}\tr[\Upsilon^{A\prec B \prec C_I}(M_A^{(i)} \otimes M_B^{(j)} \otimes \ketbra{0}{0}_{C_I})]\, ,
    \end{split}
\end{gather}
where $\rho_{\Ads_I\Bds_IC_I'} = \Phi^+_{\Ads_I} \otimes \Phi^+_{\Bds_I} \otimes \ketbra{0}{0}_{C_I}$, $\Xds_{I/O} = X_{I/O}X_{I/O}'$, and we have omitted identity maps and matrices where they appear. Evidently, since Alice and Bob cannot influence each other in this scenario, the overall process when discarding the qubit $C_I$ is of the type $W^{A\|B}$, and it is easy to see that 
\begin{gather}
    p_C(0) W^{(0)} + p_C(1) W^{(1)}  = \tfrac{1}{d_{A_I}d_{B_I}} \ident_{AB}\, ,
\end{gather}
where, as before, $W^{(0)}$ and $W^{(1)}$ are the process matrices obtained for the two different measurement outcomes on $C_I$. We stress that the success probability for this circuit is \textit{always} non-vanishing, and given by $p_C(0) = 1/(d_{A_I}d_{B_I} \lambda_\mathrm{max})$, where $d_X$ is the dimension of $\Hcal_X$ and $\lambda_\mathrm{max}$ is the maximal eigenvalue of the realised process matrix. 

\begin{figure}[ht!]
    \centering
    \includegraphics[width=0.6\linewidth]{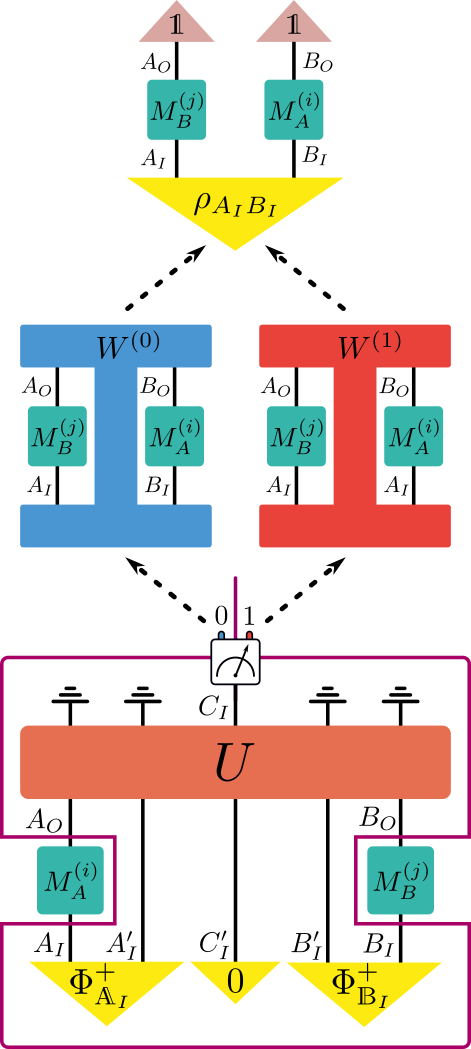}
    \caption{\textbf{Process Matrix via conditioning}. Any Process matrix $W$ on Alice ($A$) and Bob ($B$) can be realised using a circuit consisting of: two initial maximally entangled states $\Phi^+_{\Ads_I}$ and $\Phi^+_{\Bds_I}$ (where we have used the shorthand $\Xds_I = X_IX'_I$), an ancillary state $\ket{0}_{C_I'}$, a unitary map (with corresponding Choi matrix $U$) that acts on all of the involved spaces, and a final conditioning on a measurement of the additional degrees of freedom $C_I$. This set-up, together with the comb $\Upsilon_{ABC_I}$ (with a final output line on $C_I$) one would receive without conditioning (outlined in magenta) is displayed in the bottom of the figure. Conditioning on measurement outcomes (here, $0$ and $1$) on $C_I$ then yields the process matrices $W^{(0)}$ and $W^{(1)}$ (middle of the figure). Choosing $U$ accordingly for the desired $W$, the probabilities obtained by conditioning on, say, outcome $0$, then coincide with those that one would obtain from $W$ (see\eq~\eqref{eqn::Equality_Prob}). Graphically, the resulting temporal Born rule is depicted as a process matrix with two `slots', with the respective CP maps inserted into those slots. Discarding $C_I$, i.e., combining the conditioned process matrices $W^{(0)}$ and $W^{(1)}$ yields the (causally separable) process matrix $W^{A\|B} = \rho_{A_IB_I} \otimes \ident_{A_OB_O}  = \ident_{AB}/{d_{A_I}d_{B_I}}$ (top of the figure). Note that time flows from bottom to top.}
    \label{fig::Conditioning_Circuit}
\end{figure}

While any process matrix $W$ can be realised by means of the above procedure, it is \textit{a priori} unclear, what properties the comb $\Upsilon_{ABC_I}$ has to satisfy in order to realise process matrices with different properties, such as causal non-separability. In this paper we study the properties of this comb and in particular the different choices in the conditioning itself. For example, as we shall see in Sec.~\ref{sec::CausReal}, the combs used to realise arbitrary process matrices do not have to display quantum correlations (in the splitting $AB:C_I$), even if the realised $W$ is causally non-separable. On the other hand, while entanglement is not necessary, entanglement in the splitting $AB:C_I$ still proves useful to increase the robustness for realising causal non-separability (see Sec.~\ref{sec::CausReal}). 

It is worth pointing out the similarities and differences of our procedure with the quantum switch~\cite{chiribella_quantum_2013, procopio_experimental_2015, goswami_indefinite_2018} and the protocols that use it~\footnote{Besides a system $C_I$ that is used as a control, the quantum switch can also have an additional non-trivial input and/or output wire. This difference to our scheme is not relevant for the reasoning we employ here.}. In order to see an enhancement in, for example, communication scenarios~\cite{ebler_enhanced_2018,chiribella_indefinite_2018, wilson_quantum_2020, procopio_communication_2019, guo_experimental_2020}, it is -- just like in our procedure -- crucially important to be in possession of the control qubit~\cite{ebler_enhanced_2018, costa_no-go_2020} (in our case, the system $C_I$). However, there are two main differences: on the one hand, our scheme requires conditioning. On the other hand, while in our case the remaining process is of a \textit{definite} causal order when the control qubit is discarded, in the case of the switch, the remaining process is a convex mixture of opposing causal orders. We will return to this latter distinction between the quantum switch and our conditioning procedure in Sec.~\ref{sec::CondCausOrder}, where we discuss the relation of causal order and conditioning and demonstrate that conditioning may lead to different \textit{definite} causal orders, making causality itself basis-dependent.

\subsection{Causal robustness}
In order for us to carry out our investigation, and to be able to quantify how far a given process matrix deviates from the set of causally separable ones, it is necessary to introduce a measure that allows us to gauge the causal non-separability of a process. One possible way to do so is the causal robustness~\cite{araujo_witnessing_2015} $\Ccal_R(W)$ that measures how much worst-case noise can be mixed with a given process matrix $W$ before it becomes causally separable: 
\begin{gather}
\label{eqn::Robustness}
\begin{split}
    &\Ccal_R(W) \\
    &=\!  \min \{s\geq 0 | \tfrac{W+ sW'}{1+s} \!=\! q W^{A\prec B} \!+\! (1-q) W^{B\prec A}\},
    \end{split}
\end{gather}
for some proper process matrix $W'$, some causally ordered process matrices $\{W^{A\prec B}, W^{B\prec A}\}$, and some probability $q$. Evidently, $\Ccal_R(W)$ vanishes iff $W$ is causally separable. Besides satisfying some reasonable desiderata one would require from a measure of causal non-separability (such as monotonicity under local unitary operations~\cite{araujo_witnessing_2015}), $\Ccal_R$ is amenable to efficient numerical evaluation, as it can be phrased as a semidefinite program \textbf{(SDP)}~\cite{araujo_witnessing_2015}. We provide this SDP, which we will use throughout to quantify the causal non-separability of the process matrices we consider, in App.~\ref{app:CausRob}.

\section{Basis-independent causal non-separability}
\label{sec::CausalCoherence}
In Ref.~\cite{milz_entanglement_2018}, an explicit example was given for conditionally realising the causally non-separable four-qubit process matrix 
\begin{gather}
    W^{\mathrm{(OCB})} = \tfrac{1}{4}[\ident_{AB} + \tfrac{1}{\sqrt{2}}(\sigma^z_{A_O} \sigma^z_{B_I} + \sigma^z_{A_I} \sigma^x_{B_I} \sigma^z_{B_O})]\, ,
\end{gather}
where $\{\sigma^x,\sigma^z\}$ are Pauli matrices on the respective spaces, and we have omitted the tensor products and identity matrices. We will denote this particular process matrix $W^{(\mathrm{OCB})}$ after the authors of Ref.~\cite{OreshkovETAL2012}, where it was first introduced. Using the SDP provided in the Appendix, the causal robustness of $W^{(\mathrm{OCB})}$ can be computed to be $\Ccal_R(W^{(\mathrm{OCB})}) \approx 0.17$. 

The parallel circuit, which allows one to realise $W^{(\mathrm{OCB})}$ with probabiltiy $p_C(0) = 1/2$ yields the causally ordered comb
\begin{gather}
\label{eqn::WocbIncoherent}
    \Upsilon_{ABC_I} \!=\! \tfrac{1}{2}W^{(\mathrm{OCB})} \otimes \ketbra{0}{0}_{C_I} \!+\! \tfrac{1}{2}W^{\#} \otimes \ketbra{1}{1}_{C_I}\, ,
\end{gather}
where $W^{\#} =  \tfrac{1}{2}\ident_{AB} - W^{(\mathrm{OCB})}$ is also causally non-separable~\cite{milz_entanglement_2018}; conditioning on the outcome $0$ when measuring the system $C_I$ in the computational basis then yields the process matrix $W^{(\mathrm{OCB})}$. Interestingly, Ref.~\cite{milz_entanglement_2018} proves that, in order to realise a causally non-separable process matrix, the total initial state in Fig.~\ref{fig::Conditioning_Circuit} must be genuinely entangled across all three parties $ABC$, and the unitary $U$ must also have entangling power. On the other hand, the resultant comb $\Upsilon_{ABC_I}$ of\eq~\eqref{eqn::WocbIncoherent} displays \textit{no} quantum correlations in the splitting $AB:C_I$.

While the set of combs with only classical correlations in the pertinent splitting allows for the realisation of causally non-separable process matrices, Eq.~\eqref{eqn::WocbIncoherent} raises the question \textit{what happens if there is entanglement between the conditioning qubit and the remaining degrees of freedom?} Put differently, a generic comb $\Upsilon_{ABC_I}$ will contain genuine quantum correlations across the partitions i) $A$ and $C_I$, ii) $AB$ and $C_I$ -- both corresponding to genuine quantum memory~\cite{giarmatzi_witnessing_2018} -- and iii) $B$ and $C_I$, which corresponds to a direct quantum cause (i.e., a quantum channel) between Bob and Charlie. These correlations constitute a useful resource for, amongst others, realising causally indefinite processes. 

It is easy to see that the causal non-separability of the resulting $W$ critically depends on the measurement basis. For example, in the above scenario, conditioning with respect to a measurement in the $\{\ket{\pm}\}$ basis yields the two process matrices $W^{(+)} = W^{(-)} = \tfrac{1}{4} \ident_{AB}$, which are causal. Put differently, `looking' at the process in different bases yields different resulting (conditional) causal structures and makes the property of causal non-separability vanish. Adding entanglement between the control qubit and the remaining degrees of freedom might help making this conditioning procedure more stable (in a sense defined below), potentially leading to scenarios where, independent of the respective measurement basis, conditioning \textit{always} leads to causally non-separable resulting processes. We explore this question in detail in Sec.~\ref{sec::Entanglement}, and further explore the basis dependence of causal ordering in Sec.~\ref{sec::CondCausOrder}. 

On the other hand, $W^{(\mathrm{OCB})}$ by means of a comb that does not display quantum correlations in the splitting $AB:C_I$, which raises the complementary questions, to the one above, \textit{could all process matrices can be obtained without quantum correlations as in the above splitting? If so, how do we interpret causally non-separable process matrices?} We start with this latter questions.

\subsection{Heralded Causal non-separability}
\label{sec::CausReal}

In general, the absence of entanglement between the conditioning system and the relevant degrees of freedom implies that measurements on $C_I$ merely herald pre-existing objects, but do not `create' them. In particular, for a comb of the form\eq~\eqref{eqn::WocbIncoherent}, a computational basis measurement on $C_I$ is noninvasive, suggesting that observing any one of the two possible observed outcomes reveals which of the two causally non-separable processes was `realised' in an individual run. Interestingly, such a realisation scheme without entanglement between $AB$ and $C$ exists for \textit{any} process matrix $W$ and we have the following Observation:
\begin{observation}
\label{obs::ElementReal}
For any process matrix $W\in \Bcal(\Hcal_A\otimes \Hcal_B)$ there exists a probability $p>0$ and a proper process matrix $W^\prime$, such that 
\begin{gather}
\label{eqn::ObsElement}
    \Upsilon_{ABC_I} = pW \otimes \ketbra{0}{0}_{C_I} +  (1-p) W^\prime \otimes \ketbra{1}{1}_{C_I} 
\end{gather}
is a causally ordered comb with $C_I$ as the last party. 
\end{observation}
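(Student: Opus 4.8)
The plan is to read the defining relation~\eqref{eqn::ObsElement} as an equation \emph{for} the unknown $W'$ and let the comb constraint dictate its value. Since we want $\Upsilon_{ABC_I}$ to be causally ordered with $C_I$ last, discarding $C_I$ must leave a valid parallel process $W^{A\|B}$; by the causality requirement~\eqref{eqn::Causality3} (specialised to the order $A\|B\prec C_I$) this reduced object has the product form $\ident_{A_OB_O}\otimes\rho_{A_IB_I}$. Because $\Upsilon_{ABC_I}$ is block diagonal in the computational basis of $C_I$, the partial trace is simply $\tr_{C_I}\Upsilon_{ABC_I}=pW+(1-p)W'$. I would therefore impose that this equal the maximally mixed parallel process $\tfrac{1}{d_{A_I}d_{B_I}}\ident_{AB}$, exactly the combination that the parallel circuit produces across the two conditioning outcomes, which determines
\[
W'=\tfrac{1}{1-p}\left(\tfrac{1}{d_{A_I}d_{B_I}}\ident_{AB}-pW\right).
\]

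First I would fix $p$ by demanding $W'\geq 0$. As $W\geq 0$ with largest eigenvalue $\lambda_\mathrm{max}$, the operator $\tfrac{1}{d_{A_I}d_{B_I}}\ident_{AB}-pW$ has smallest eigenvalue $\tfrac{1}{d_{A_I}d_{B_I}}-p\lambda_\mathrm{max}$, so any $0<p\leq 1/(d_{A_I}d_{B_I}\lambda_\mathrm{max})$ yields $W'\geq 0$; this is precisely the success probability $p_C(0)$ of the parallel circuit, and since a process matrix always has $\lambda_\mathrm{max}\geq 1/(d_{A_I}d_{B_I})$ the bound lies in $(0,1]$, so a strictly positive admissible $p<1$ always exists.

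Next I would verify that $W'$ is a \emph{proper} process matrix, i.e.\ that it obeys~\eqref{eqn::Def_W}. Using $\tr_{X_O}M_X=\ident_{X_I}$ one has $\tr M_X=d_{X_I}$, hence $\tr[\ident_{AB}(M_A\otimes M_B)]=d_{A_I}d_{B_I}$, while $\tr[W(M_A\otimes M_B)]=1$ since $W$ is a process matrix; substituting into the formula for $W'$ gives $\tr[W'(M_A\otimes M_B)]=\tfrac{1}{1-p}(1-p)=1$ for all CPTP $M_A,M_B$, so the normalisation is automatic and only positivity had to be arranged. It then remains to confirm that $\Upsilon_{ABC_I}$ itself is a legitimate comb: positivity is immediate from $W,W'\geq 0$ and $0<p<1$, and by construction $\tr_{C_I}\Upsilon_{ABC_I}=\tfrac{1}{d_{A_I}d_{B_I}}\ident_{AB}=\ident_{A_OB_O}\otimes\rho_{A_IB_I}$ with $\rho_{A_IB_I}$ maximally mixed, which is a bona fide $W^{A\|B}$.

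I expect the main obstacle to be conceptual rather than computational: recognising that the comb constraint does all the work, forcing $pW+(1-p)W'$ to equal a fixed parallel process and thereby \emph{determining} $W'$ up to the single free parameter $p$, and then checking that the positivity cap on $p$ is compatible with the requirement $p>0$ (which is all the Observation asks for). A secondary subtlety I would address carefully is that the single partial-trace condition is not merely necessary but \emph{sufficient} for $\Upsilon_{ABC_I}$ to be a comb of order $A\|B\prec C_I$: for this parallel order the reduced process is fully of product form $\ident_{A_OB_O}\otimes\rho_{A_IB_I}$, so no further recursive trace conditions arise, and any positive operator with this reduction admits a dilation into the parallel circuit of Fig.~\ref{fig::Conditioning_Circuit}.
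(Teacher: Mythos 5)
Your proposal is correct and takes essentially the same route as the paper's proof: both fix $W'$ by demanding $pW+(1-p)W' = \ident_{AB}/(d_{A_I}d_{B_I})$, choose $p>0$ small enough that $W'\geq 0$, and conclude from positivity together with $\tr_{C_I}\Upsilon_{ABC_I}=\ident_{AB}/(d_{A_I}d_{B_I})$ that the causality constraint of Eq.~\eqref{eqn::Causality3} is satisfied. Your explicit bound $p \leq 1/(d_{A_I}d_{B_I}\lambda_\mathrm{max})$ and the trace computation verifying that $W'$ obeys Eq.~\eqref{eqn::Def_W} merely spell out steps the paper dispatches as ``easy to see.''
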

\begin{proof}
For the proof, we first note that $\ident_{AB}/d_{A_IB_I}$ is a proper process matrix (with causal ordering $A\|B$). Given any process matrix $W \in \Bcal(\Hcal_A\otimes \Hcal_B)$ there always exists a probability $p>0$ such that $(\ident_{AB}/d_{A_IB_I} - pW) = (1-p)W' \geq 0$, where the factor $(1-p)$ is introduced for correct normalization of $W'$. It is easy to see that if $W$ is a proper process matrix, then so is $W'$. Setting $\Upsilon_{ABC_I} = pW \otimes \ketbra{0}{0}_{C_I} +  (1-p) W^\prime \otimes \ketbra{1}{1}_{C_I}$, we see that -- by construction -- $\tr_{C_I}\Upsilon_{ABC_I} = \ident_{AB}/d_{A_IB_I}$ and $\Upsilon_{ABC_I} \geq 0$; thus, $\Upsilon_{ABC_I}$ satisfies the causality constraints of Eq.~\eqref{eqn::Causality3}, implying that it is a causally ordered comb.
\end{proof}

Since all process matrices can be implemented by means of a comb of the form of \eq~\eqref{eqn::ObsElement}, an interesting interpretation suggests itself: causally non-separable processes can be `present' in individual runs of an experiment (and are heralded by the measurement outcome on $C$); but this possible absence of causal order is washed out on average, i.e., when the system $C$ is discarded.

At first glance, this latter statement might not seem surprising. Indeed, let us consider the spatial case; there, when measuring, e.g., the completely mixed state $\ident/2$ of a qubit in the computational basis, one will obtain outcomes $0$ (with corresponding state $\ket{0}$) and $1$ (with corresponding state $\ket{1}$) with equal probability, suggesting that half of the time (whenever outcome $0$ was obtained) the state $\ket{0}$ was prepared and the other half of the time (whenever outcome $1$ is obtained) the state $\ket{1}$. However, the state $\ident/2$ could, for example, be understood in the same vein as an equal mixture of $\ket{+}$ and $\ket{-}$, invalidating this interpretation, and seemingly casting doubt on the interpretation we provided above for processes. However, in the process case we discussed above, the conditioning happens on an \textit{external} system. The better analogy in the spatial case would thus be to consider a machine that prepares states $\ket{0}$ and $\ket{1}$ with equal probabilities, but, whenever preparing either of these states, it attaches a corresponding flag to it, resulting in the overall state 
\begin{gather}
    \rho = \tfrac{1}{2} \ketbra{0}{0} \otimes \ketbra{0}{0}_{\rm flag} + \tfrac{1}{2} \ketbra{1}{1} \otimes \ketbra{1}{1}_{\rm flag}
\end{gather}
Discarding the flag would, again, yield the maximally mixed state. However, now, by measuring the flag, the obtained outcome heralds the state that was prepared in the respective run of the experiment. In the same vein, the flag on $C_I$ in Eq.~\eqref{eqn::ObsElement} can be considered as revealing the process matrix of the respective run of the experiment.

Evidently, as the notion of a causal order (or the absence thereof) is not well-defined for an \textit{individual} run of an experiment, but rather corresponds to a statistical statement over many runs, such an interpretation has to be taken with care. This fact notwithstanding, for the above comb $\Upsilon_{ABC_I}$ of\eq~\eqref{eqn::ObsElement}, obtaining one of the outcomes $\{0,1\}$ when measuring $C_I$ in the computational basis can be interpreted as revealing which of the process matrices $\{W, W'\}$ was employed in the respective run of the experiment. In other words, in each run of the experiment there is no causal order between $A$ and $B$, whether $C_I$ is observed or not. For a given outcome on $C_I$, we cannot even attribute probabilistic causal direction between $A$ and $B$. We only see a causally ordered process on average due to our ignorance of the measurement on $C_I$. Importantly, this interpretation would not hold if there was entanglement, or other quantum correlations, in the splitting $AB:C_I$; in this case, measurements on $C_I$ would not simply reveal a pre-existing property as the system $C_I$ would not only be a heralding flag.

While the same arguments could seemingly be made for conditioning on \textit{any} positive matrix $W$ -- valid process or not -- there is, as we already emphasized above, a fundamental difference between process matrices that are valid, and those that are not; for the former, the conditioning probability is \textit{independent} of the instruments that Alice and Bob employ, while for the latter the probability to obtain outcome $0$ or $1$ will \textit{always} depend on Alice's and Bob's instruments~\cite{silva_connecting_2017, milz_entanglement_2018}, making the conditioning procedure somewhat ill-defined in the sense mentioned above; additionally, such a dependence on the instrument would prevent an interpretation of the `realised' positive matrix $W$ as a pre-existing object, but rather one that appears to be contingent on Alice's and Bob's respective instruments.

On the other hand, \textit{any} conditioning procedure of the quantum comb of\eq~\eqref{eqn::ObsElement} on $C_I$ will yield a proper process matrix, making such conditioning scenarios well-defined. However, as previously mentioned, in general, not all such conditioning will lead to causally non-separable process matrices, even if conditioning in the computational basis does. Next, we will show that the range of bases that lead to causally non-separable process matrices can be extended when entanglement is present, making correlations between $AB$ and $C_I$ a robustness resource. 

\subsection{Entanglement and causal non-separability}
\label{sec::Entanglement}
In the previous sections, we discussed (the comb of) a concrete circuit for the realisation of $W^{(\mathrm{OCB})}$ by means of measurements on $C_I$. Here, starting from this concrete circuit and the corresponding $\Upsilon_{ABC_I}$, we investigate how `robust' such a procedure can be made by adding entanglement between $C_I$ and $AB$.

Naturally, the causality properties of the conditioned process matrices depend on the employed conditioning basis. For example, conditioning the comb $\Upsilon_{ABC_I}$ of Eq.~\eqref{eqn::WocbIncoherent} in the $\{\ket{\pm}_{C_I}\}$ basis yields a causally separable process, as 
\begin{gather}
    \tr_{C_I}(\Upsilon_{ABC_I} \ketbra{\pm}{\pm}_{C_I}) \propto \ident_{AB}\, .
\end{gather}
Consequently, here, by `robust' we mean
the range of conditioning bases for which the resulting process matrix is still causally non-separable. Using the ideas developed above, we show that it is possible to devise a circuit that yields a causally non-separable process matrix for \textit{any} conditioning basis.

\begin{figure*}[t]
\centering
\subfigure[$~\{c_{11},c_{15},c_{51}\} = \{0,0,0\}$]
{
\includegraphics[width=0.28\linewidth]{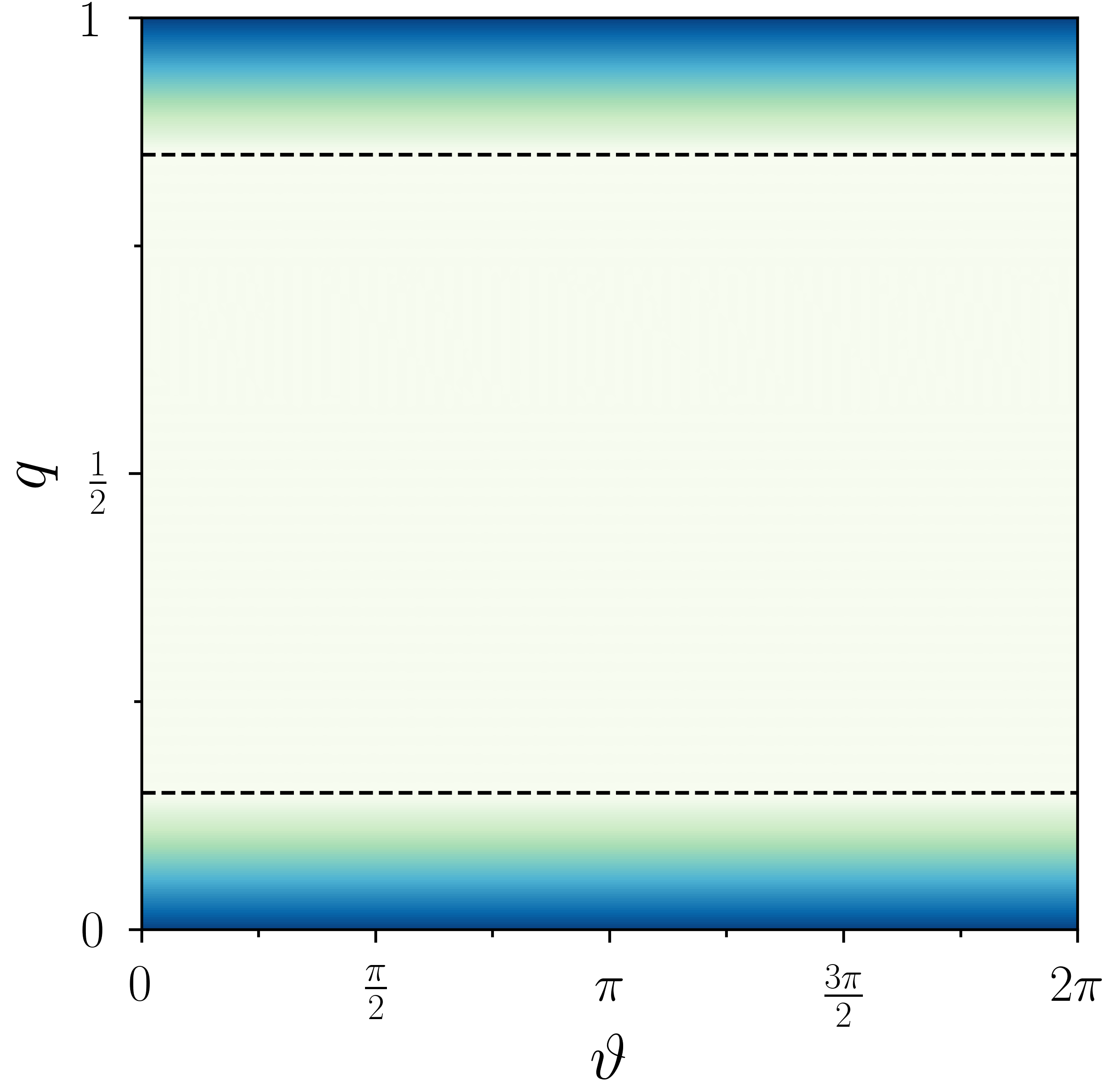}
\label{fig::HeatPlot000}
}\hfill
\subfigure[$~\{c_{11},c_{15},c_{51}\} = \{\tfrac{1}{4},0,0\}$]
{
\includegraphics[width=0.28\linewidth]{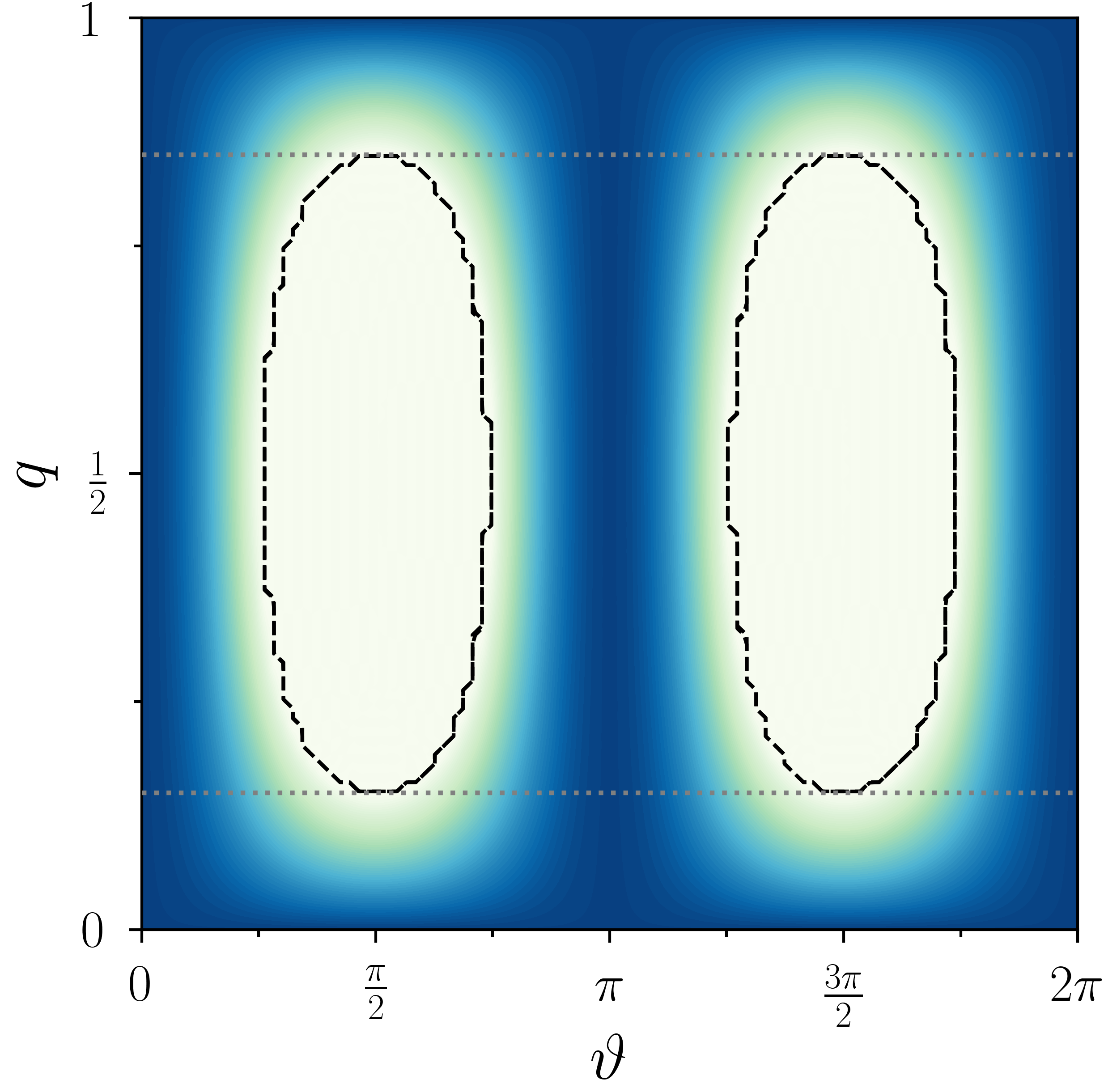}
\label{fig::HeatPlot1/400}
}\hfill
\subfigure[$~\{c_{11},c_{15},c_{51}\} =\{\tfrac{1}{4\sqrt{2}},\tfrac{1}{4\sqrt{2}},\tfrac{1}{4\sqrt{2}}\}$] 
{ 
\includegraphics[width=0.28\linewidth]{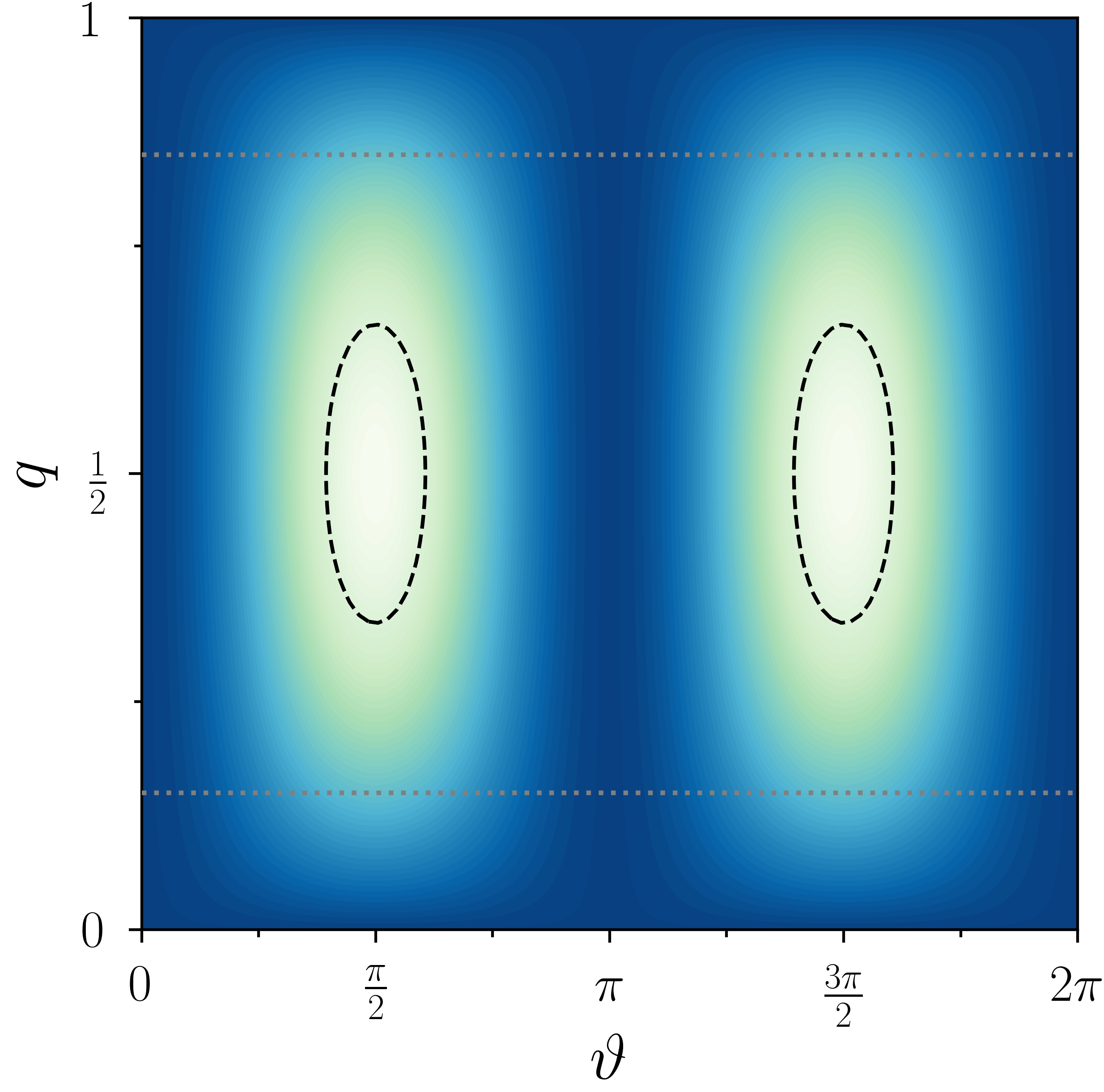}
\label{fig::HeatPlot1/41/41/4}
}
\subfigure
{ 
\includegraphics[width=0.0485\linewidth]{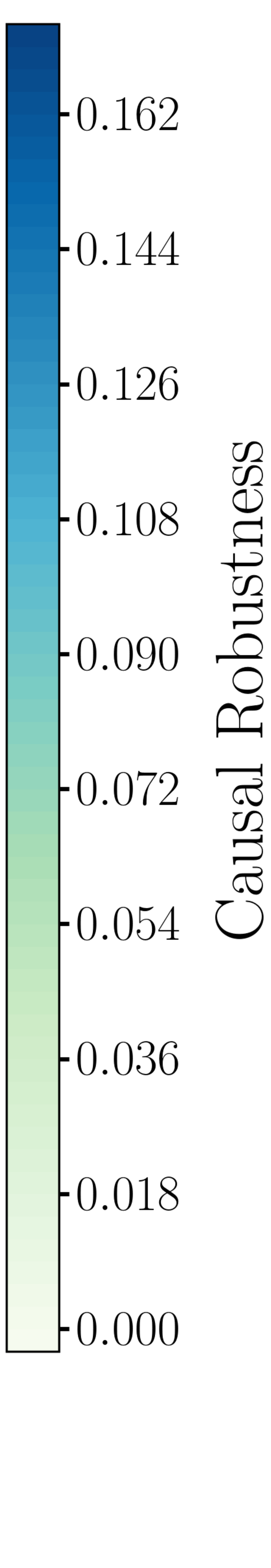}
}
\caption{\textbf{(a) Causal Robustness for $\{c_{11},c_{15},c_{51}\} = \{0,0,0\}$}. Conditioning on $\ket{0}$ ($\ket{1}$) corresponds to the lines $q=0$ ($q=1$), where the causal robustness is maximal (for the chosen conditioning set-up). Biasing the conditioning basis towards a superposition of $\ket{0}$ and $\ket{1}$, i.e., increasing $q$ from $q=0$ or decreasing it from $q=1$ then quickly leads to causally separable process matrices. The causal robustness of the conditioned process matrices is independent of the angle $\vartheta$. For reference, the lines $q=0.15$ and $q=0.85$  (dotted lines) where the causal robustness drops to zero, respectively, are added. \textbf{(b) and (c) Causal Robustness for $\{c_{11},c_{15},c_{51}\} = \{\tfrac{1}{4},0,0\}$} and $ \{c_{11},c_{15},c_{51}\} =\{\tfrac{1}{4\sqrt{2}},\tfrac{1}{4\sqrt{2}},\tfrac{1}{4\sqrt{2}}\}$ (evaluated on a $100\times 100$ grid). In both cases, the parameter range for which $W(q,\vartheta)$ is causally non-separable is significantly increased with respect to Fig.~\ref{fig::HeatPlot000}, and the causal robustness of the conditioned process matrices depends on the angle $\vartheta$. For reference and comparison, the contours $\Ccal_R(W(q,\vartheta))=0$ (black dotted lines) as well as the corresponding lines from panel (a) (gray dotted lines) are shown, respectively.}
\end{figure*}

To start with, we consider the causally ordered comb
\begin{gather}
\begin{split}
    \Upsilon^F_{ABC_I} &= \tfrac{1}{2} W^{(\mathrm{OCB})}\otimes \ketbra{0}{0}_{C_I} + \tfrac{1}{2} W^{\#}\otimes \ketbra{1}{1}_{C_I} \\
    \label{eqn::FTerms}
    &\phantom{=}+ F\otimes \ketbra{0}{1}_{C_I} + F^\dagger \otimes \ketbra{1}{0}_{C_I} \, ,
    \end{split}
\end{gather}
where $W^{\#} =  \tfrac{1}{2}\ident_{AB} - W^{(\mathrm{OCB})}$. If $F = 0$, $\Upsilon^F_{ABC_I}$ is separable in the splitting $C_I:AB$, and we recover the original parallel circuit scenario of\eq~\eqref{eqn::WocbIncoherent} for the realisation of $W^{(\mathrm{OCB})}$. If $\Bcal(\Hcal_{A} \otimes \Hcal_B) \ni F \neq 0$, then -- as long as $F$ leads to a valid process (see below) -- $\Upsilon^F_{ABC_I}$ is (generally) entangled and potentially more robust, in the above sense, against a change of conditioning basis. To see this more clearly, consider the process matrix $W(q,\vartheta)$ obtained from conditioning $\Upsilon^F_{ABC_I}$ on a measurement outcome corresponding to the general pure qubit state $\ket{\Phi(q,\vartheta)} = \sqrt{q}\ket{0}_{C_I} + \sqrt{1-q}e^{\iu \vartheta}\ket{1}_{C_I}$. As $F$ in\eq~\eqref{eqn::FTerms} has to be traceless for $\Upsilon^F_{ABC_I}$ to be positive (see App.~\ref{app::Fterms}), the conditioning probability is equal to $1/2$ and we have
\begin{gather}
    \begin{split}
  W(q,\vartheta) &= qW^{(\mathrm{OCB})} + (1-q)W^\# \\
  &\phantom{=}+ 2\sqrt{q(1-q)} (e^{\iu \vartheta} F + e^{-\iu \vartheta} F^\dagger)\, .
\end{split}
\end{gather}
Choosing a non-vanishing $F$ can now drastically increase the range of parameters $(q,\vartheta)$ for which $W(q,\vartheta)$ is causally non-separable, as compared to the case $F=0$.

Before continuing, it is worth discussing why a circuit that realises $W^{(\mathrm{OCB})}$ is a good starting point for the analysis we aim to conduct. While $W^{(\mathrm{OCB})}$ is not the process matrix that maximizes the causal robustness for the case of two parties and qubit systems~\cite{Bava}, it has some appealing properties that make it a good candidate for such an investigation. On one hand, while not maximal, its causal robustness is nevertheless large. More importantly still, it is of rank $8$ (which is half of the full rank) and all of its eigenvalues are equal to $\tfrac{1}{2}$, such that
\begin{gather}
    W^{(\mathrm{OCB})} W^\# = W^{(\mathrm{OCB})} (\tfrac{1}{2} \ident - W^{(\mathrm{OCB})}) = 0\, ,
\end{gather}
which significantly simplifies the following considerations. In particular, using the eigendecomposititons $W^{(\mathrm{OCB})} = \tfrac{1}{2} \sum_{i=1}^8 \ketbra{\Psi_i}{\Psi_i}$ and $W^\# = \tfrac{1}{2} \sum_{j=1}^8 \ketbra{\Psi^\perp_j}{\Psi_j^\perp}$, where $\braket{\Psi_i|\Psi_{i'}} = \delta_{ii'}$, $\braket{\Psi_j^\perp|\Psi_{j'}^\perp} = \delta_{jj'}$, and  $\braket{\Psi_i|\Psi_{j}^\perp} = 0$, we show in App.~\ref{app::Positivity} that $F$ needs to  be of the form 
\begin{gather}
\label{eqn::Parameters}
F = \sum_{i,j=1}^8 (c_{ij} \ketbra{\Psi_i}{\Psi_j^\perp} + d_{ij} \ketbra{\Psi_i}{\Psi_j^\perp})\, ,
\end{gather}
with $c_{ij}, d_{ij} \in \mathbbm{C}$, for $\Upsilon^F_{ABC_I}$ to be positive (naturally, not \textit{all} $c_{ij}, d_{ij} \in \mathbbm{C}$ lead to positive $\Upsilon^F_{ABC_I}$). Additionally, in order for $\Upsilon^F_{ABC_I}$ to be positive, it is necessary that all coefficients $d_{ij}$ vanish (see App.~\ref{app::Positivity}). Finally, imposing that conditioning on \textit{any} $\ket{\Phi(q,\vartheta)}$ yields a \textit{proper} process matrix, i.e., one that satisfies\eq~\eqref{eqn::Def_W} allows us to further reduce the number of free parameters in\eq~\eqref{eqn::Parameters}. In App.~\ref{app::validF}, we show that there are three free parameters $\{c_{11},c_{15},c_{51}\}$ that remain, while all other parameters $c_{ij}$ either vanish or are determined by the choice of those three parameters. Consequently, choosing a triple $\{c_{11},c_{15},c_{51}\}$, computing the remaining parameters according to the conditions worked out in App.~\ref{app::validF}, and checking that the resulting $\Upsilon^F_{ABC_I}$ is positive then ensures that every conditioned $W(q,\vartheta)$ that results from it is a proper process matrix. Having reduced the number of free parameters down to three thus provides a good test-bed to investigate the stability of the conditioning procedure against changes in the conditioning basis. 

Below we explore this parameter space in some detail for the interested reader (others may wish to directly move to Obs.~\ref{obs:obsindcauinsp}, which is our second main result).
To this end, in order to establish a baseline, we first provide the conditioning results for the case $\{c_{11},c_{15},c_{51}\} = \{0,0,0\}$, i.e., $F = 0$. As already mentioned, in this case, the resulting conditioned process matrix is definitely causally separable for $\Phi(q = \tfrac{1}{2},\vartheta = 0) = \ket{+}$. However, as can be readily seen from the corresponding plot, in Fig.~\ref{fig::HeatPlot000}, of the causal robustness with respect to the conditioning parameters $q$ and $\vartheta$, the conditioned process matrices are causally separable for a large range of parameters, and only become causally non-separable when $\Phi(q,\vartheta)$ is sufficiently close to $\ket{0}$ or $\ket{1}$. More concretely, the causal robustness decreases with $|q-\tfrac{1}{2}|$, and $W(q,\vartheta))$ becomes causally separable at $q\approx 0.85$ and $q\approx0.15$, respectively. Additionally, due to the absence of off-diagonal terms when $F=0$, the angle $\vartheta$ of the state $\ket{\Phi(q,\vartheta)}$ has no influence on the causal robustness of the resulting process matrices $W(q,\vartheta)$.

Having established this baseline, we can now analyse the influence of non-vanishing terms $F$, and thus -- at least in all the cases we consider -- non-vanishing entanglement between $C_I$ and $AB$. First, for simplicity, we set $c_{15} = c_{51} = 0$. In this case, as we show in App.~\ref{app::Pos_Rev}, we must have $|c_{11}| \leq \tfrac{1}{4}$ for $\Upsilon^F_{ABC_I}$ to be positive. A natural choice is thus $\{c_{11},c_{15},c_{51}\} = \{\tfrac{1}{4},0,0\}$. The causal robustness of the resulting process matrices $W(q,\vartheta)$ is displayed in Fig.~\ref{fig::HeatPlot1/400}. With respect to the results for $\{c_{11},c_{15},c_{51}\} = \{0,0,0\}$, the parameter space for which $W(q,\vartheta)$ is causally non-separable is significantly increased. While, as before, $W(q,\vartheta)$ is still causally non-separable for $q \in [0.85,1]$ and $q \in [0.15,1]$, now, depending on the angle $\vartheta$, there are causally non-separable process matrices for all values of the parameter $q$.

We can achieve even better results, i.e., a wider range of parameters, for which $W(q,\vartheta)$ is causally non-separable, by choosing all of the coefficients $\{c_{11},c_{15},c_{51}\}$ to be the same (and equal to $c$). As we show in App.~\ref{app::Pos_Rev}, this implies $|c| \leq \tfrac{1}{4\sqrt{2}}$. The corresponding results for the choice $\{c_{11}, c_{15}, c_{51}\} = \{\tfrac{1}{4\sqrt{2}}, \tfrac{1}{4\sqrt{2}}, \tfrac{1}{4\sqrt{2}}\}$ are shown in Fig.~\ref{fig::HeatPlot1/41/41/4}. 

Given that the two previous choices for the coefficients $\{c_{11}, c_{15}, c_{51}\}$ yield process with low causal robustness on the line $q=\tfrac{1}{2}$, it appears natural to search for coefficients that `maximize' the causal robustness along said line, i.e., the coefficients, for which
\begin{gather}
\displaystyle \min_{\vartheta}[\Ccal_R(W(\tfrac{1}{2},\vartheta))]    
\end{gather}
is maximized (and non-vanishing). Given that such an  optimization requires the solution of a large number of SDPs for each choice of $\{c_{11},c_{15},c_{51}\}$, it is out of reach for the full parameter space of allowed coefficient triplets. However, focusing on the family $\{c_{11},e^{\iu \varphi_1}c_{11},e^{\iu \varphi_2}c_{11}\}$, with $c_{11} \in \mathbbm{R}$, allows one to find a choice of coefficients that likely leads to conditioned process matrices $W(q,\vartheta)$ that are causally non-separable on the line $q=\tfrac{1}{2}$, and, potentially, also on the remaining space of conditioning parameters $\{q,\vartheta\}$. We provide the conditions on $|c_{11}|$ for said family to yield a positive $\Upsilon^F_{ABC_I}$ in App.~\ref{app::Pos_Rev}

Following this approach, we find that a good candidate for coefficients that are optimal in the above sense is given by $\{c_{11},c_{15},c_{51}\} = \{\tfrac{1}{8},-\tfrac{1}{8},\tfrac{1}{8}\}$ (see Fig.~\ref{fig::Heat_Plot_ideal} for the corresponding heat plot). 
\begin{figure}[t]
    \centering
    \includegraphics[width=0.85\linewidth]{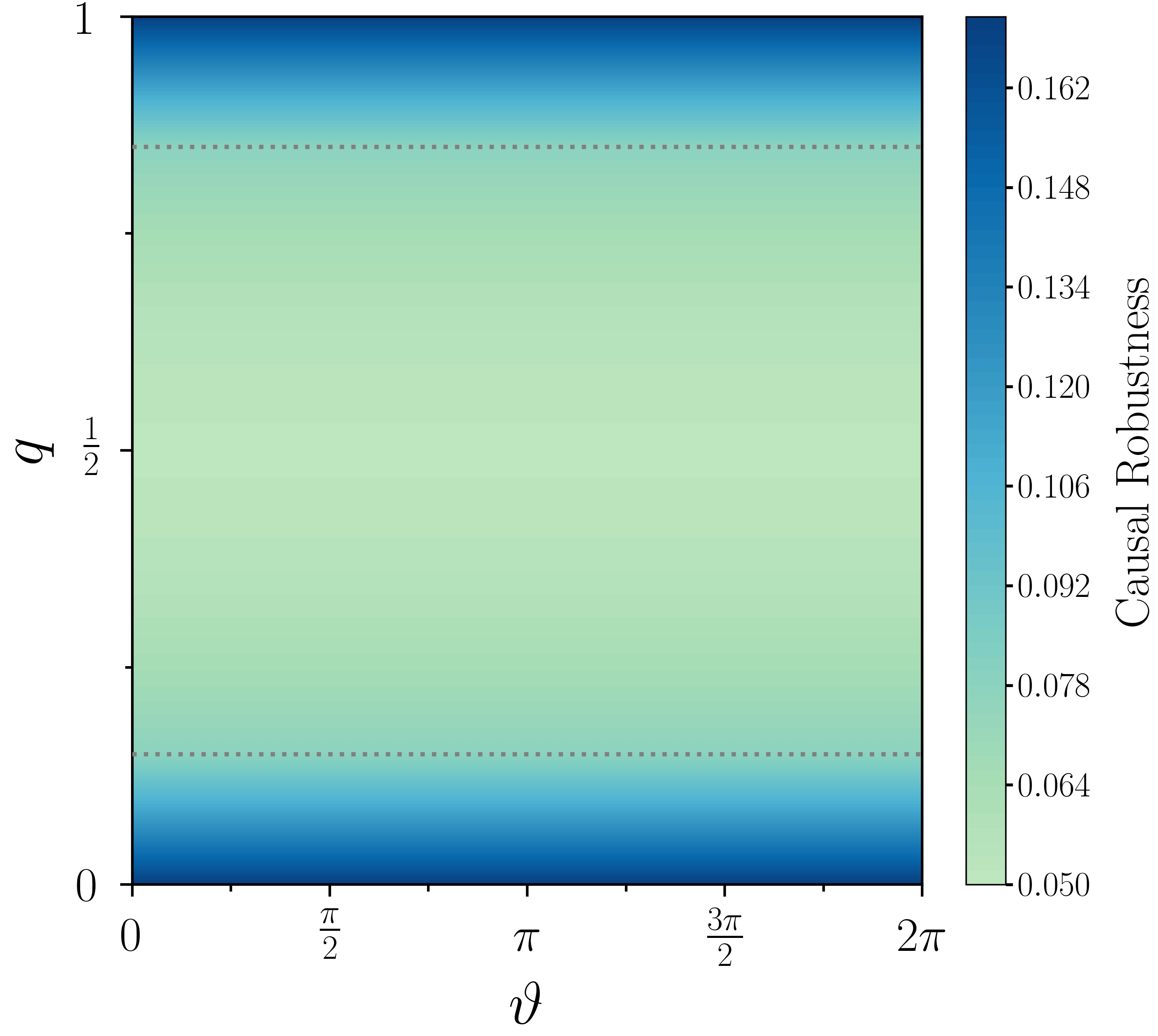}
    \caption{\textbf{Causal Robustness for $\{c_{11},c_{15},c_{51}\} = \{\tfrac{1}{8},\tfrac{-1}{8},\tfrac{1}{8}\}$}. The conditioned process matrices have non-vanishing causal robustness ($\geq 0.051782$) for all values of $q$ and $\vartheta$ (evaluated on a $100\times 100$ grid), i.e., for all possible conditionings.}
    \label{fig::Heat_Plot_ideal}
\end{figure}
For this choice of coefficients, as is obvious from Fig.~\ref{fig::Heat_Plot_ideal}, \textit{all} conditioned process matrices are causally non-separable. We provide a proof of this statement in App.~\ref{app::Allnonsep}. This leads to the following observation: 
\begin{observation}\label{obs:obsindcauinsp}
There are causally ordered combs $\Upsilon^F_{ABC_I}$ that lead to causally non-separable conditioned process matrices $W(q,\vartheta)$ for conditioning in any basis. 
\end{observation}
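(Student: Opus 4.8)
The plan is to prove the statement constructively: I would take the specific comb $\Upsilon^F_{ABC_I}$ fixed by $\{c_{11},c_{15},c_{51}\}=\{\tfrac18,-\tfrac18,\tfrac18\}$ (with the remaining $c_{ij}$ determined by the validity conditions of App.~\ref{app::validF}), and show that the conditioned family
\[
W(q,\vartheta) = qW^{(\mathrm{OCB})} + (1-q)W^\# + 2\sqrt{q(1-q)}\,(e^{\iu\vartheta}F + e^{-\iu\vartheta}F^\dagger)
\]
is causally non-separable for every $q\in[0,1]$ and $\vartheta\in[0,2\pi)$. Since $\Ccal_R(W)=0$ exactly when $W$ is causally separable, the goal is equivalent to $\Ccal_R(W(q,\vartheta))>0$ on this compact parameter set, which I would certify through the dual of the robustness SDP, i.e. by exhibiting causal witnesses $S$ (Hermitian operators with $\tr[S W_{\mathrm{sep}}]\ge0$ for all causally separable $W_{\mathrm{sep}}$) for which $\tr[S W(q,\vartheta)]<0$.

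A first simplification I would attempt is to eliminate the angle $\vartheta$. Conditioning on $\ket{\Phi(q,\vartheta)}$ rather than $\ket{\Phi(q,0)}$ is a $z$-rotation on the control qubit $C_I$, whose only effect on the conditioned process is the replacement $F\mapsto e^{\iu\vartheta}F$. If this phase can be undone by a local unitary on Alice's and Bob's spaces that leaves $W^{(\mathrm{OCB})}$ and $W^\#$ invariant, then, because causal non-separability is invariant under local unitaries (it is monotone under them and they are invertible), it would suffice to treat the single one-parameter family $W(q,0)$. The obstacle is that the eigenvectors $\{\ket{\Psi_i}\}$, $\{\ket{\Psi_j^\perp}\}$ of $W^{(\mathrm{OCB})}$ and $W^\#$ are entangled across $A$ and $B$, so a phase rotation diagonal in these eigenbases need not be local; whether a genuinely local symmetry implementing $F\mapsto e^{\iu\vartheta}F$ exists has to be checked against the explicit Pauli structure of $W^{(\mathrm{OCB})}$. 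If none is available, I would retain $\vartheta$ and carry it through the witness argument.

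For the witness construction I would exploit that $W^{(\mathrm{OCB})}$ and $W^\#$ have orthogonal supports ($W^{(\mathrm{OCB})}W^\#=0$) and that, by its form in Eq.~\eqref{eqn::Parameters}, $F$ is purely off-block-diagonal between these supports. Near the endpoints the argument is easy: a witness $S_{\mathrm{OCB}}$ detecting $W^{(\mathrm{OCB})}$ gives $\tr[S_{\mathrm{OCB}}W(q,\vartheta)]<0$ for $q$ close to $1$ uniformly in $\vartheta$ (since the endpoint value $\tr[S_{\mathrm{OCB}}W^{(\mathrm{OCB})}]<0$ is $\vartheta$-independent), and symmetrically a witness $S_\#$ detecting $W^\#$ covers a neighbourhood of $q=0$. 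The difficulty concentrates in the interior band and is sharpest at $q=\tfrac12$, where the diagonal part reduces to the causally separable identity process $\tfrac14\ident_{AB}$ and all non-separability must be supplied by $e^{\iu\vartheta}F+e^{-\iu\vartheta}F^\dagger$. Here a single $\vartheta$-independent witness provably cannot work: validity forces $\tr[S(W^{(\mathrm{OCB})}+W^\#)]=2\tr[S\,\tfrac14\ident_{AB}]\ge0$, so the diagonal contribution at $q=\tfrac12$ is nonnegative, while the $F$-contribution $2\,\mathrm{Re}(e^{\iu\vartheta}\tr[SF])$ only oscillates with $\vartheta$. One is therefore forced to build a $\vartheta$-dependent family of witnesses, each detecting the off-block-diagonal perturbation along one range of directions $e^{\iu\vartheta}$; compactness of $[0,1]\times[0,2\pi)$ then reduces the task to finitely many such witnesses whose negativity regions cover the parameter space.

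The main obstacle is thus the explicit construction and verification of these interior witnesses, i.e. showing analytically that for the choice $\{\tfrac18,-\tfrac18,\tfrac18\}$ the Hermitian operator $e^{\iu\vartheta}F+e^{-\iu\vartheta}F^\dagger$ pushes $\tfrac14\ident_{AB}$ out of the causally separable set for \emph{every} phase $\vartheta$, and not merely for a fortunate subset. I would write the witness in the block form adapted to the supports of $W^{(\mathrm{OCB})}$ and $W^\#$, use the freedom in the $A\prec B$ and $B\prec A$ constraints of Eq.~\eqref{eqn::Causality_constraints} to engineer an $S(\vartheta)$ whose expectation value tracks $-|\tr[SF]|$, and then bound $\tr[S(\vartheta)W(q,\vartheta)]$ below by patching the endpoint and interior contributions. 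The uniform numerical lower bound $\Ccal_R\ge 0.051782$ of Fig.~\ref{fig::Heat_Plot_ideal} serves both as the target and as a guide for where the interior witnesses must be placed; turning that numerical certificate into a closed-form family of witnesses (or, equivalently, into a single continuous witness $S(q,\vartheta)$ extracted from the parametrized dual SDP) is the crux of the proof, and the full verification is carried out in App.~\ref{app::Allnonsep}.
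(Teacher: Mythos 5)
Your proposal is correct and takes essentially the same route as the paper's proof in App.~\ref{app::Allnonsep}: fix the comb with $\{c_{11},c_{15},c_{51}\}=\{\tfrac18,-\tfrac18,\tfrac18\}$, certify $\Ccal_R(W(q,\vartheta))>0$ via causal witnesses from the dual SDP, and exploit compactness of $[0,1]\times[0,2\pi]$ to cover the parameter space with finitely many witness regions (the paper uses $20$, listed in Tab.~\ref{tab::Witness_list} and depicted in Fig.~\ref{fig::Covering}, with $\vartheta$-independent endpoint witnesses and $\vartheta$-dependent interior ones, e.g.\ the inequality $0.25+0.604\sqrt{q(1-q)}\cos\vartheta<0$ at $q_f=\tfrac12$). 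Your observation that no single $\vartheta$-independent witness can work at $q=\tfrac12$ correctly anticipates why the paper needs six separate witnesses along that line.
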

See Fig.~\ref{fig::Comic} for a graphical representation. While the above observation \textit{a priori} only holds true for conditioning with rank-one measurements, we can even allow for some noise in the conditioning process. Numerically, the causal robustness of $W(q,\vartheta)$ never falls below $0.051782$, implying that there is a $5\%$ robustness against worst case noise in the measurement procedure, before the least robust $W(q,\vartheta)$ becomes causally non-separable. 

Crucially, the above result implies that it is not necessary for the conditioning party $C$ to be perfectly aligned with the remaining two parties $AB$ in order for causal non-separability to occur. To see this, consider a situation where only conditioning a single fixed basis yields causally non-separable process matrices. This case would require an observer $C$ that is perfectly aligned (or knows in what sense they are misaligned) with said basis in order for causal non-separability to be observed. 

Here, on the other hand, \textit{any} observer that conditions the system in some (arbitrarily chosen) basis would create a causally non-separable process on Alice and Bob. Consequently, as no particular alignment of the conditioning party is required, we will call this effect \textit{basis-independent}. Importantly though, while it is \textit{basis-independent}, i.e., independent of the pure measurement that is carried out, the causal non-separability of the conditioned process matrices is \textit{not} device-independent. Since tracing out the degrees of freedom $C_I$ yields a causally ordered process, there always exists a trivial POVM 
\begin{gather}
\{E_0 = \tfrac{1}{2}\ident_{C_I},E_1 = \tfrac{1}{2}\ident_{C_I}\}
\end{gather}
such that both `outcomes' yield a causally ordered process.

Our above results establish causal non-separability as a property that can exist in a basis-independent manner. In the next section, we will again make use of the coherence terms $F$ to realise processes where the causal order is basis-dependent. With the invariance of causal order under change of basis in mind, we now turn our attention to the inverse question: can causal order itself be basis-dependent? While special relativity forbids such an effect, we will see that within the conditioning framework we use, such a basis-dependence is indeed possible.

\section{Basis-dependent causal order}
\label{sec::CondCausOrder}

Up to this point, we have considered conditioning scenarios that were designed so that they yield causally non-separable processes, and we were interested in the stability with respect to the choice of conditioning basis. Here, we abandon these considerations of robustness and ask the related question: Can causal order itself be basis-dependent, i.e., are there processes where conditioning in one basis yields a process that is ordered $A\prec B$, while conditioning in a different basis yields a process that is ordered $B\prec A$? Here, we show that this is possible, both probabilistically, i.e., the respective conditioned processes only display the desired causal order when the `correct' outcome in Charlie's laboratory occurs, and, importantly, deterministically, i.e., the causal ordering of the observed conditioned processes only depends on the choice of measurement basis, but not on the respective outcomes. While the former scenario potentially allows for the realisation of a wider range of processes with opposing causal order, it is perfectly conceivable classically. However, the deterministic case is of foundational importance, as it admits the interpretation of causality as a measurement-dependent property, since the causal direction can be chosen at will by Charlie. Due to this contextual nature, the latter scenario is genuinely quantum.

A complementary, albeit formally different question with respect to the dependence of causal order on experimental observations has been considered in Ref.~\cite{pienaar_time-reversible}, where time-reversible (quantum) causal models and the influence of the observer on the perception of causal order were studied; there, the perceived causal structure with respect to the employed operations (in our notation, the operations $M_A^{(i)}$ and $M_B^{(j)}$) was analysed. In our work, the respective operations in Alice's and Bob's laboratory are unrestricted, and the respective causal order is contingent on the conditioning basis in Charlie's laboratory. Additionally, such a potential measurement-dependence of causal order is reminiscent of the quantum switch, with the crucial difference that the conditioning combs $\Upsilon_{ABC_I}$ we consider are causally ordered, while the switch is causally non-separable~\cite{araujo_witnessing_2015}. This, in turn, allows one to probabilistically condition onto opposing causal orders by means of one measurement basis, a feat not possible when causally ordered combs are employed (see below).

In what follows, when we consider causally ordered processes, we will mean definite causal order, i.e., not of the form $A\|B$, unless explicitly stated otherwise. Naturally, changing the conditioning basis changes the properties of the respective conditioned processes. In principle then, conditioning in two different bases might yield processes of opposing different orders. Importantly though, such an effect is indeed basis dependent and can only occur for two \textit{different} choices of conditioning bases; as we show below, it cannot be present when conditioning in only \textit{one} fixed basis with two different possible outcomes is considered.

\subsection{Opposite causal order for different conditioning bases}

We first show that, using two different conditioning bases, it is indeed possible to obtain processes of opposing causal orders. To this end, we make the following observation: 
\begin{observation}
\label{obs::OppCausOrd}
If two processes $\ident_{A_O} \otimes W_{A_IB}^{B\prec A}$ and $\ident_{B_O} \otimes W_{AB_I}^{A\prec B}$ of opposite definite causal order satisfy 
\begin{gather}
\begin{split}
    &p\ident_{A_O} \otimes W_{A_IB}^{B\prec A} \leq \ident_{B_O} \otimes W_{AB_I}^{A\prec B} \\
    \label{eqn::opp_caus_ord}
    \text{or} \quad &p\ident_{B_O} \otimes W_{AB_I}^{A\prec B} \leq \ident_{A_O} \otimes W_{A_IB}^{B\prec A}
\end{split}
\end{gather}
for some $0<p\leq 1$, then there exists a causally ordered process $\Upsilon_{ABC_I}$ such that conditioning on one of the outcomes when measuring in the $\{\ket{0},\ket{1}\}$ and $\{\ket{+},\ket{-}\}$ bases yields respective processes of opposing causal order. 
\end{observation}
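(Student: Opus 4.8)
The plan is to mimic the heralding construction of Obs.~\ref{obs::ElementReal}, but with one crucial twist: instead of completing the conditioned process to the maximally mixed (hence $A\|B$) comb, I would complete it to a comb whose \emph{reduced} process is itself of a definite order. Writing $W_1 := \ident_{A_O}\otimes W_{A_IB}^{B\prec A}$ and $W_2 := \ident_{B_O}\otimes W_{AB_I}^{A\prec B}$ for the two opposite-order processes, and assuming without loss of generality the first of the hypotheses, $pW_1\leq W_2$ (the second case follows by exchanging the roles of $W_1,W_2$ together with the two measurement bases), I would try the purely \emph{diagonal} ansatz
\begin{gather}
  \Upsilon_{ABC_I} = pW_1\otimes\ketbra{0}{0}_{C_I} + (W_2 - pW_1)\otimes\ketbra{1}{1}_{C_I}\, .
\end{gather}
The whole point of this choice is that the two diagonal blocks are engineered to sum to $W_2$, so no off-diagonal coherence term between $\ket{0}_{C_I}$ and $\ket{1}_{C_I}$ is needed at all.

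The verification then splits into three short steps, closely paralleling the proof of Obs.~\ref{obs::ElementReal}. First, positivity: $pW_1\geq 0$ trivially, while $W_2 - pW_1\geq 0$ is \emph{precisely} the assumed inequality~\eqref{eqn::opp_caus_ord}, so $\Upsilon_{ABC_I}\geq 0$. Second, the comb condition: by construction $\tr_{C_I}\Upsilon_{ABC_I} = pW_1 + (W_2 - pW_1) = W_2 = \ident_{B_O}\otimes W_{AB_I}^{A\prec B}$, which is manifestly of the form demanded by Eq.~\eqref{eqn::Causality3} for the order $A\prec B\prec C_I$; hence $\Upsilon_{ABC_I}$ is a causally ordered comb with $C_I$ as the last party. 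Third, the two conditionals: conditioning on the outcome $\ket{0}$ of the $\{\ket{0},\ket{1}\}$ basis returns a matrix proportional to $pW_1$, i.e.\ the process $W_1$ of order $B\prec A$, while conditioning on the outcome $\ket{+}$ of the $\{\ket{+},\ket{-}\}$ basis picks out $\tfrac12(pW_1 + W_2 - pW_1) = \tfrac12 W_2$, i.e.\ the process $W_2$ of order $A\prec B$. These are of opposing causal order, which is the claim.

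Honestly, the main ``obstacle'' here is conceptual rather than computational: one has to spot that the \emph{complement} block should be taken to be $W_2-pW_1$ (so that the reduced comb is the opposite-order process $W_2$) rather than the maximally mixed noise of Obs.~\ref{obs::ElementReal}; once this is seen, the inequality~\eqref{eqn::opp_caus_ord} is exactly the positivity certificate and the construction goes through with a vanishing coherence term. I would additionally check that the conditioning probabilities are well defined and instrument-independent, which they are because $W_1,W_2$ are proper processes: $p_C(\ket{0}) = \tr[pW_1(M_A\otimes M_B)] = p$ and $p_C(\ket{+}) = \tfrac12$ for all CPTP $M_A,M_B$. Finally, I would note for consistency with the single-basis no-go discussed below that \emph{neither} basis alone reveals both orders---the $\{\ket{0},\ket{1}\}$ basis exposes $B\prec A$ (on outcome $\ket{0}$) while the $\{\ket{+},\ket{-}\}$ basis only ever returns the definite-order average $W_2$---so the two distinct bases are genuinely required.
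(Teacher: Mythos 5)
Your proposal is correct and is essentially identical to the paper's own proof: your diagonal block $W_2 - pW_1$ is precisely the paper's $(1-p)W^\#$ from Eq.~\eqref{eqn::PropW_constr}, and every subsequent step (positivity certified by the hypothesis~\eqref{eqn::opp_caus_ord}, the reduced comb $\tr_{C_I}\Upsilon_{ABC_I} = \ident_{B_O}\otimes W^{A\prec B}_{AB_I}$ establishing the $A\prec B\prec C_I$ ordering via Eq.~\eqref{eqn::Causality3}, conditioning on $\ket{0}$ giving the $B\prec A$ process and on $\ket{+}$ giving $\tfrac{1}{2}\ident_{B_O}\otimes W^{A\prec B}_{AB_I}$) matches the paper's argument step for step. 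Your added checks that the conditioning probabilities are $p$ and $\tfrac{1}{2}$ independently of Alice's and Bob's instruments are correct and a harmless supplement.
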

\begin{proof}
We show this observation by explicit construction, focusing on the case $p\ident_{A_O} \otimes W_{A_IB}^{B\prec A} \leq \ident_{B_O} \otimes W_{AB_I}^{A\prec B}$. The other case follows in the same vein. We set 
\begin{gather}
\label{eqn::PropW_constr}
(1-p)W^\# := \ident_{B_O} \otimes W_{AB_I}^{A\prec B} - p\ident_{A_O} \otimes W_{A_IB}^{B\prec A}.
\end{gather} 
By assumption, $W^\#\geq 0$ holds, and it is easy to see that $W^\#$ is a proper process matrix. With this, we can define
\begin{gather}
\label{eqn::CausalityCombNoCorr}
    \begin{split}
    \Upsilon^{A\prec B \prec C_I}_{ABC_I} =& \ p\ident_{A_O} \otimes W_{A_IB}^{B\prec A} \otimes \ketbra{0}{0}_{C_I} \\ &+(1-p)W^\# \otimes \ketbra{1}{1}_{C_I}.
\end{split}
\end{gather}
Analogous to the proof of Obs.~\ref{obs::ElementReal}, we see that $\Upsilon^{A\prec B \prec C_I}_{ABC_I}$ satisfies the causality constraint of Eq.~\eqref{eqn::Causality3}, which implies that it is a causally ordered comb with ordering $A\prec B\prec C_I$. Conditioning on outcome $0$ (which occurs with probability $p$) when measuring in the computational basis yields the process matrix $\ident_{A_O} \otimes W_{A_IB}^{B\prec A}$ which is ordered $B\prec A$ by assumption. On the other hand, conditioning on outcome $+$ (corresponding to the projector $\ketbra{+}{+}_{C_I}$) when measuring in the $\{\ket{\pm}\}$ basis yields 
\begin{gather}
\begin{split}
    W^{(+)} &\propto \tfrac{1}{2}p\ident_{A_O} \otimes W_{A_IB}^{B\prec A} + \tfrac{1}{2}(1-p)W^\# \\
    &= \tfrac{1}{2}\ident_{B_O} \otimes W_{AB_I}^{A\prec B}\, ,
    \end{split}
\end{gather}
where we have used\eq~\eqref{eqn::PropW_constr}. As $W^{(+)}$ is thus ordered $A\prec B$, this concludes the proof.  
\end{proof}
It remains to show that there indeed exist two processes of opposing causal order, such that one of the Eqs.~\eqref{eqn::opp_caus_ord} is satisfied. Such processes are not hard to find. For example, if a process $\ident_{B_O} \otimes W_{AB_I}^{A\prec B}$ is of full rank, then for \textit{any} $B\prec A$ process $\ident_{A_0}\otimes W_{A_IB}^{B\prec A}$, by continuity, there exists a $p>0$ such that $\ident_{B_O} \otimes W_{AB_I}^{A\prec B} - p\ident_{A_0}\otimes W_{A_IB}^{B\prec A} \geq 0$. A simple Markovian~\cite{pollock_non-markovian_2018,pollock_operational_2018,1367-2630-18-6-063032,giarmatzi_witnessing_2018} example of a full rank $A\prec B$ process is
\begin{gather}
    W_{AB_I}^{A\prec B} =   \tfrac{1}{2}\ident_{A_I} \otimes [(r\widetilde\Phi^+_{A_OB_I} + \tfrac{(1-r)}{2}\ident_{A_OB_I})]\, ,
\end{gather}
where the unnormalized maximally entangled state $\widetilde\Phi^+_{A_OB_I}$ is the Choi matrix of the identity channel $\Ical_{A_O \rightarrow B_I}$. For $0<r<1$, the above process is of full rank and of causal order $A\prec B$, thus allowing for the realisation of two opposite causal orders for conditioning in two different bases (see Fig.~\ref{fig::Comic} for a graphical representation).

As before, somewhat surprisingly, the provided scenario does \textit{not} require any entanglement between $C_I$ and $AB$ in the employed causally ordered process $\Upsilon^{A\prec B \prec C_I}_{ABC_I}$. While it allows for the realisation of opposing causal orders by means of measurements in two different bases, this prescription has the obvious drawback that for the `unwanted' outcomes (here, $1$ and $-$), the realised process matrix does not possess the desired causal order. More specifically, in\eq~\eqref{eqn::PropW_constr}, $W^\#$ cannot be of causal ordering $A\prec B$, as otherwise\eq~\eqref{eqn::PropW_constr} could not hold (the sum of two process matrices of order $B\prec A$ cannot be of order $A\prec B$). Rather, $W^\#$ is either a mixture of causal orders or it is causally non-separable, implying that for the outcome $1$, the resulting process matrix is not of the desired order. This, then, renders the above scheme a probabilistic one with respect to a POVM.

Importantly, this caveat cannot be remedied in the absence of quantum correlations between $C_I$ and $AB$; if, for example, the process matrix $W^\#$ in\eq~\eqref{eqn::CausalityCombNoCorr} was of the same order as $W_{A_IB}^{B\prec A}$, the process matrix obtained for outcome $0$, then \textit{no} conditioning basis could lead to a process of opposite causal order; adding classical correlations would only lead to convex combinations of processes of order $B\prec A$, which, itself would again be a process of the same ordering. This situation changes drastically when correlations between $C_I$ and $AB$ are present in $\Upsilon_{ABC_I}^{A\prec B\prec C_I}$.

\subsection{Delayed-choice causal order}
As we have seen in Sec.~\ref{sec::Entanglement}, entanglement can vastly enhance the robustness for realising a causally non-separable processes. We show that it allows for causal order to be considered a basis-dependent quantity. While this was already an implication of Obs.~\ref{obs::OppCausOrd}, there, it was still a question of chance; not every outcome led to the desired causal order, implying that the causal order was not merely fixed by the choice of basis, but by the choice of basis \textit{and} the obtained measurement outcome. We now provide a scenario, where Charlie, by choosing the basis he measures in, can choose the direction of the causal order. In particular, we have the following observation
\begin{observation}
\label{obs::DetCausOrder}
Causal order can be instrument-dependent in a deterministic way, i.e., the causal order of the realised process matrices is fully determined by the respective choice of basis. 
\end{observation}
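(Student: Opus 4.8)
The plan is to prove Observation~\ref{obs::DetCausOrder} by explicit construction, producing a single causally ordered comb $\Upsilon_{ABC_I}$ together with a pair of conditioning bases such that conditioning in the first basis yields a process of order $A\prec B$ regardless of outcome, while conditioning in the second yields a process of order $B\prec A$ regardless of outcome. The crucial distinction from Obs.~\ref{obs::OppCausOrd} is that \emph{every} outcome in a given basis must land on the same causal order; the discussion following Obs.~\ref{obs::OppCausOrd} already established that this deterministic behaviour is impossible without quantum correlations between $C_I$ and $AB$, so the coherence terms $F\otimes\ketbra{0}{1}_{C_I} + F^\dagger \otimes \ketbra{1}{0}_{C_I}$ of\eq~\eqref{eqn::FTerms} must be switched on and carry the load.

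First I would fix two target processes $W^{B\prec A}$ and $W^{A\prec B}$ of opposing order and write the candidate comb in the form $\Upsilon_{ABC_I} = W_0\otimes\ketbra{0}{0}_{C_I} + W_1\otimes\ketbra{1}{1}_{C_I} + F\otimes\ketbra{0}{1}_{C_I} + F^\dagger\otimes\ketbra{1}{0}_{C_I}$, with diagonal blocks $W_0,W_1$ chosen as processes of definite order. Conditioning on $\ket{\Phi(q,\vartheta)} = \sqrt{q}\ket{0}_{C_I} + \sqrt{1-q}e^{\iu\vartheta}\ket{1}_{C_I}$ gives (up to normalization) $W(q,\vartheta) = qW_0 + (1-q)W_1 + 2\sqrt{q(1-q)}(e^{\iu\vartheta}F + e^{-\iu\vartheta}F^\dagger)$, exactly as in Sec.~\ref{sec::Entanglement}. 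The design goal is to choose $W_0$, $W_1$, and $F$ so that along the computational-basis axis (outcomes $q=0$ and $q=1$) both resulting processes are ordered $B\prec A$, whereas along the $\{\ket{\pm}\}$ axis (i.e.\ $q=\tfrac12$, $\vartheta\in\{0,\pi\}$) both resulting processes are ordered $A\prec B$. I would impose the causality constraints of\eq~\eqref{eqn::Causality_constraints} on each of the four conditioned matrices, which reduce to linear conditions on $W_0,W_1,F$; the $A\prec B$ conditions on $W^{(\pm)} \propto \tfrac12(W_0+W_1) \pm F$ and the $B\prec A$ conditions on $W_0,W_1$ separately pin down the structure of $F$, and I would then verify positivity of the full $\Upsilon_{ABC_I}$ (equivalently, positivity of the block matrix $\left(\begin{smallmatrix} W_0 & F \\ F^\dagger & W_1\end{smallmatrix}\right)$) to confirm it is a genuine comb via\eq~\eqref{eqn::Causality3}.

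The key tension is that the off-diagonal block $F$ must simultaneously vanish from the diagonal-basis conditionings (so $W_0,W_1$ alone must be $B\prec A$ processes) yet contribute decisively to the $\{\ket{\pm}\}$ conditionings to flip the order to $A\prec B$; this is only consistent because adding $\pm F$ to the causally indefinite mixture $\tfrac12(W_0+W_1)$ can project it onto opposite definite orders, which is impossible with $F=0$. I expect the main obstacle to be establishing \emph{positivity} of $\Upsilon_{ABC_I}$ once the causality constraints have forced a nontrivial $F$: the block-positivity condition $F = W_0^{1/2} K W_1^{1/2}$ with $\|K\|\leq 1$ must be compatible with the rather rigid requirements that $W_0,W_1$ be $B\prec A$ while $\tfrac12(W_0+W_1)\pm F$ are both $A\prec B$. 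I would resolve this by exhibiting an explicit low-dimensional example (e.g.\ building on the rank structure and eigenbasis $\{\ket{\Psi_i},\ket{\Psi_j^\perp}\}$ exploited in Sec.~\ref{sec::Entanglement}, or on a Markovian full-rank seed like $W_{AB_I}^{A\prec B}$ from the previous subsection), checking the four causal orderings and the single positivity condition directly, and noting that continuity guarantees a nonempty range of the free coherence coefficient for which all constraints hold simultaneously.
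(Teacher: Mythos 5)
Your proposal follows essentially the same route as the paper's proof: the paper also establishes Obs.~\ref{obs::DetCausOrder} by explicit construction of a comb of exactly your block form, choosing diagonal blocks $W_0 = \tfrac{1}{4}\ident_{AB} + \alpha\,\sigma^x_{A_I}\sigma^x_{A_O}\sigma^x_{B_I}$ and $W_1 = \tfrac{1}{4}\ident_{AB} - \alpha\,\sigma^x_{A_I}\sigma^x_{A_O}\sigma^x_{B_I}$ (both of definite order, conditioned in the $z$-basis) and coherence block $F = \beta\,\sigma^x_{A_I}\sigma^x_{B_I}\sigma^x_{B_O}$, so that $x$-basis conditioning yields $W^{(\pm)} = \tfrac{1}{8}\ident_{AB} \pm \beta\,\sigma^x_{A_I}\sigma^x_{B_I}\sigma^x_{B_O}$, both of the opposite order; the positivity obstacle you anticipate is settled in closed form there, since the comb's smallest eigenvalue is $\tfrac{1}{8}(1-4\sqrt{\alpha^2+\beta^2})$, which is your continuity argument made concrete. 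One aside in your write-up is incorrect, though harmless to the plan: the mixture $\tfrac{1}{2}(W_0+W_1)$ can never be ``causally indefinite'' --- it equals $\tr_{C_I}\Upsilon_{ABC_I}$ and is therefore causally ordered by \eqref{eqn::Causality3}; in fact your own design constraints force it to be of the $A\|B$ form $\ident_{A_OB_O}\otimes\rho_{A_IB_I}$, because $W_0+W_1$ must simultaneously carry $\ident_{A_O}$ (as a sum of $B\prec A$ processes) and $\ident_{B_O}$ (as $W^{(+)}+W^{(-)}$, a sum of $A\prec B$ processes, the $F$-contributions cancelling), so the Pauli-decomposition argument from the proof of Obs.~\ref{obs::nogo} pins it to identity on both outputs. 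This is precisely why the paper's example uses $\pm\alpha$ terms that cancel, giving $\tfrac{1}{2}(W_0+W_1) = \tfrac{1}{4}\ident_{AB}$; your linear constraint-solving would have rediscovered this condition (and note the small notational slip: your ``$\pm F$'' in $W^{(\pm)}$ should read $\pm(F+F^\dagger)$, consistent with your own general conditioning formula).
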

Before proving this observation, we emphasize the analogy to the results of Sec.~\ref{sec::Entanglement}. There, without added entanglement in the splitting $C_I:AB$, it appeared to not be possible to devise a scenario that led to causally non-separable process matrices for all conditioning basis. Here, entanglement allows us to overcome the limitations that apply for combs without the respective correlations and enables us to choose the causal order of the conditioned processes deterministically. We now prove the above Observation by providing an explicit example. 

\begin{proof}
To this end, consider a comb that yields two process matrices $W^{A\prec B}_{AB_I}\otimes \ident_{B_O}$ and $\widetilde{W}^{A\prec B}_{AB_I}\otimes \ident_{B_O}$ of order $A\prec B$ when conditioned in the $z$-basis, but has additional cross-terms $F\in\Bcal(\Hcal_A\otimes \Hcal_B)$:
\begin{gather}
\label{eqn::DetermCaus}
\begin{split}
    \Upsilon^{A\|B\prec C_I}_{ABC_I} 
    \!=& \tfrac{1}{2} (W^{A\prec B}_{AB_I}\otimes \ketbra{0}{0}_{C_I} \!+ \widetilde{W}^{A\prec B}_{AB_I}\otimes \ketbra{1}{1}_{C_I} \\
    &+ F\otimes \ketbra{1}{0}_{C_I} + F^\dagger\otimes \ketbra{0}{1}_{C_I})\, ,
\end{split}
\end{gather}
where, for simplicity, we omitted the respective identity matrices. Now, choosing $W^{A\prec B}_{AB_I} = \tfrac{1}{4} \ident_{AB} + \alpha\sigma_{A_I}^x\sigma_{A_O}^x\sigma_{B_I}^x$ and $\widetilde{W}^{A\prec B}_{AB_I} = \tfrac{1}{4} \ident_{AB} - \alpha\sigma_{A_I}^x\sigma_{A_O}^x\sigma_{B_I}^x$, we see that both of them are -- for $\alpha \in \mathbbm{R}$ sufficiently small -- proper process matrices with causal order $A\prec B$ (and, importantly, they are not of order $A\|B$). Consequently, for both outcomes $1$ and $0$ one obtains two (different) processes of ordering $A\prec B$. Overall, i.e., when discarding the degrees of freedom $C_I$, we have $\tfrac{1}{2} (W^{A\prec B}_{AB_I} + \widetilde{W}^{A\prec B}_{AB_I}) = \tfrac{1}{4} \ident_{AB}$, which is a process of ordering $A\|B$.  Importantly, conditioning in the $x$-basis yields the two process matrices
\begin{gather}
\begin{split}
   W^{(\pm)} &= \tr_{C_I}(\Upsilon^{A\|B\prec C_I}_{ABC_I}\ketbra{\pm}{\pm}) \\
   &= \tfrac{1}{8}\ident_{AB} \pm (F + F^\dagger)\, \, 
   \end{split}
\end{gather}
with respective probability $p=\tfrac{1}{2}$ (i.e., $2W^{\pm}$ is a properly normalised process matrix).
Here, we see that, for $F=0$, we cannot obtain process matrices $W^{(\pm)}$ of opposing causal order $B\prec A$. However, by choosing $F$ appropriately, both processes $W^{(\pm)}$ can indeed be of causal order $B\prec A$. This is, for example, achieved by setting $F= \beta \sigma_{A_I}^x\sigma_{B_I}^x\sigma_{B_{O}}^x$, in which case we have 
\begin{gather}
\label{eqn::CausalOrderDeterministic}
     W^{(\pm)} = \tfrac{1}{8}\ident_{AB} \pm \beta \sigma_{A_I}^x\sigma_{B_I}^x\sigma_{B_{O}}^x\, , 
\end{gather}
which, for appropriately chosen $\beta \in \mathbbm{R}$, is positive and satisfies -- up to normalisation -- $W^{(\pm)} = \ident_{A_O}\otimes W_{A_IB}^{(\pm)B\prec A}$, but not $W^{(\pm)} = \ident_{B_O}\otimes W_{AB_I}^{(\pm)A\prec B}$ implying that both of them have causal order $B\prec A$. 

It remains to show that these choices actually lead to a proper comb $\Upsilon_{ABC_I}^{A\|B\prec C_I}$. First, from\eq~\eqref{eqn::CausalOrderDeterministic} we see that that $\Upsilon_{ABC_I}^{A\|B\prec C_I}$ indeed satisfies the relevant causality constraints, as $\tr_{C_I}\Upsilon_{ABC_I}^{A\|B\prec C_I} = \tfrac{1}{4}\ident_{AB}$. On the other hand, with the choices we made, the smallest eigenvalue of $ \Upsilon_{ABC_I}^{A\|B\prec C_I}$ is given by $\tfrac{1}{8}(1-4\sqrt{\alpha^2 +\beta^2})$, which can be satisfied by choosing $|\alpha|\neq 0$ and $|\beta| \neq 0$ sufficiently small.
\end{proof}

While the above $\Upsilon_{ABC_I}^{A\|B\prec C_I}$ yields a different process matrix for each of the considered outcomes, the causal ordering of these processes only depends on the respective instrument, \textit{not} the specific outcome of the instrument; conditioning in the basis $\{\ket{0/1}\}$ leads to processes of order $A\prec B$, while conditioning in the $\{\ket{\pm}\}$ yields processes of ordering $B\prec A$. Consequently, causal order indeed becomes -- in a well-defined sense -- an instrument-dependent property and can be chosen at will by Charlie.

It is worth clarifying that, in the above scheme, Charlie is not predetermining the causal order or signalling to Alice and Bob which causal order he wishes to see. Importantly, Charlie can choose the causal direction \textit{after} the experiment (in Alice's and Bob's laboratories) has already concluded. Therefore, this process is a causal version of the famous \textit{delayed-choice} experiment by Wheeler~\cite{wheeler, peruzzo} that renders the chicken-egg debate fundamentally unresolvable. The instrument-dependence of causality here is reminiscent of tachyons, i.e., particles that travel faster than the speed of light~\cite{tachyons}. In this case a `reinterpretation principle' is put forth as different Lorentz frames will see different causal orders; some will see a particle emitted at $A$ and absorbed at $B$, while others will see the same particle emitted at $B$ and absorbed at $A$. This means that even in absence of the theory of relativity, as in our case, a `reinterpretation principle' may be necessary in quantum mechanics.

\subsection{Causal order and conditioning in a single basis}

While, as we have seen, it is possible to devise a process such that the causal order of the resulting conditioned process matrix can be changed by changing the respective measurement basis, it is \textit{not} possible to devise a process and an instrument such that conditioning on \textit{either} outcome leads to processes of opposite causal order. Specifically, we have the following no-go Observation:
\begin{observation}
\label{obs::nogo}
Conditioning on two different outcomes of a fixed measurement basis cannot yield two causally definite process matrices of opposite causal orders. 
\end{observation}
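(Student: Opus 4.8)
The plan is to exploit the completeness of a two-outcome basis measurement on the qubit $C_I$, so that the two (unnormalised) conditioned matrices are forced to add up to the marginal process on $AB$, which is \emph{causally ordered} by hypothesis. Concretely, I would write the measurement basis as $\{\ket{\phi_0},\ket{\phi_1}\}$ and set $\widetilde W^{(0)} = \bra{\phi_0}\Upsilon_{ABC_I}\ket{\phi_0}$ and $\widetilde W^{(1)} = \bra{\phi_1}\Upsilon_{ABC_I}\ket{\phi_1}$, both positive operators on $\Bcal(\Hcal_A\otimes\Hcal_B)$ proportional (with nonzero constant, assuming both outcomes occur) to the normalised conditioned processes $W^{(0)},W^{(1)}$. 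Since $\ketbra{\phi_0}{\phi_0}+\ketbra{\phi_1}{\phi_1} = \ident_{C_I}$, I immediately obtain the exact sum relation $\widetilde W^{(0)} + \widetilde W^{(1)} = \tr_{C_I}\Upsilon_{ABC_I}$.

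The key structural input is Eq.~\eqref{eqn::Causality3}: because $\Upsilon_{ABC_I}$ is a causally ordered comb with $C_I$ last, its marginal $\tr_{C_I}\Upsilon_{ABC_I}$ is a process of definite order between $A$ and $B$, and hence, by the causality constraints~\eqref{eqn::Causality_constraints}, is \emph{flat} on at least one output space -- proportional to $\ident_{B_O}$ if the order is $A\prec B$, to $\ident_{A_O}$ if it is $B\prec A$, and to both in the degenerate $A\|B$ case. I would use the same characterisation for the conditioned processes: $W^{(0)}$ being of definite order $A\prec B$ but \emph{not} $A\|B$ means it is flat on $B_O$ yet not on $A_O$, and symmetrically a strict $B\prec A$ process is flat on $A_O$ but not on $B_O$.

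Assume, for contradiction, that $W^{(0)}$ is strictly $A\prec B$ and $W^{(1)}$ is strictly $B\prec A$ (the opposite labelling is handled identically). The trace-and-replace map $X\mapsto \ident_{B_O}\otimes \tr_{B_O}X / d_{B_O}$ is linear and fixes exactly the $B_O$-flat operators. If the marginal is flat on $B_O$ (orders $A\prec B$ or $A\|B$), then $\widetilde W^{(1)} = \tr_{C_I}\Upsilon_{ABC_I} - \widetilde W^{(0)}$ is a difference of two $B_O$-flat operators, hence itself $B_O$-flat; since $\widetilde W^{(1)}\propto W^{(1)}$, this would make $W^{(1)}$ flat on both $A_O$ and $B_O$, i.e.\ of order $A\|B$ -- contradicting strict $B\prec A$. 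If instead the marginal is flat on $A_O$ (order $B\prec A$), the symmetric argument forces $W^{(0)}$ to be $A\|B$, again a contradiction.

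The step I expect to be the crux is the assertion that the marginal is necessarily flat on at least one output: this is precisely where the hypothesis that $\Upsilon_{ABC_I}$ is a \emph{causally ordered} comb does the work, since a process with $C_I$ last whose $AB$-marginal was causally non-separable would be flat on neither output and would break the subtraction argument entirely. The remaining care is bookkeeping -- using completeness of the two-outcome basis for the exact sum relation, invoking linearity of the trace-replace projector so that flatness survives subtraction, and assuming both outcomes have nonzero probability so that the flatness of $\widetilde W^{(1)}$ transfers to the normalised $W^{(1)}$.
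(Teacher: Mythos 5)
Your proof is correct and follows essentially the same route as the paper's: both hinge on the completeness relation $\widetilde W^{(0)}+\widetilde W^{(1)}=\tr_{C_I}\Upsilon_{ABC_I}$ together with the flatness of the causally ordered marginal on one output space, so that subtraction forces the putative opposite-order process to be flat on \emph{both} outputs and hence of the (excluded) form $A\|B$. The only difference is presentational -- the paper argues flatness via the absence of non-trivial Pauli terms on $B_O$ in Eq.~\eqref{eqn::CausalDecomp}, while you use the linear trace-and-replace projector, which is the same argument in different clothing; your explicit handling of both marginal orders and of the nonzero-outcome-probability caveat is a welcome tightening of the paper's ``follows in the same vein.''
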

This Observation mirrors similar results in the unconditional case discussed in Refs. ~\cite{yokojima_consequences_2020, costa_no-go_2020}. There, it was shown that, in many simple cases, it is not possible to directly -- i.e., without an additional flag system -- superpose processes of opposing causal order. Here, on the other hand, we show that, under the assumption that the overall process is causally ordered, it is not even possible to obtain processes of opposing causal order when conditioning on an additional system.
\begin{proof}
Let us denote the process matrix obtained when conditioning on outcome $0$ by $W$, and the one obtained when conditioning on outcome $1$ by $W'$. Assuming that the process used for conditioning was of the causal order $A\prec B \prec C_I$ (the other case follows in the same vein), we have 
\begin{gather}
    qW + (1-q)W' = \Gamma^{A\prec B}\, ,
\end{gather}
where $q$ is the probability to observe outcome $0$ and the overall process matrix with definite causal order $A\prec B$ is -- to distinguish it from the conditioned ones -- denoted by $\Gamma^{A\prec B}$. Consequently, $\Gamma^{A\prec B}$ is of the form $\Gamma^{A\prec B} = \ident_{B_O} \otimes \Gamma^{A\prec B}_{AB_I}$. Now, assuming that $W$ and $W'$ are of opposite causal orders $A\prec B$ and $B\prec A$, respectively, we see that 
\begin{gather}
\label{eqn::CausalDecomp}
\begin{split}
 \ident_{B_O} \otimes \Gamma^{A\prec B}_{AB_I}
 =& q \ident_{B_O} \otimes W^{A\prec B}_{AB_I}\\
 &+ (1-q)\ident_{A_O} \otimes W^{\prime B\prec A}_{BA_I} \, .
 \end{split}
 \end{gather}
Since $W^{A\prec B}_{AB_I}$ is Hermitian, it can be decomposed in terms of generalized Pauli matrices, i.e., $W^{A\prec B}_{AB_I} = \sum_{ijkl} c_{ijk\ell} \sigma_{A_I}^i\otimes \sigma_{A_O}^j\otimes \sigma_{B_I}^k\otimes \sigma_{B_O}^\ell$. If this decomposition contains any term that has a non-trivial (i.e., $\neq \ident_{B_O}$) generalized Pauli matrix on $B_O$, then\eq~\eqref{eqn::CausalDecomp} cannot hold. Consequently, $W^{\prime B\prec A}_{BA_I}$ is of the form $W^{\prime B\prec A}_{BA_I} = \ident_{B_O} \otimes \rho_{A_IB_I}$, implying that $W^{\prime B\prec A}$ is of the form $A\|B$, which is not of opposite causal order than $W^{A\prec B}$.
\end{proof}
Importantly, the above Observation is independent of the details of the causal circuit employed, and only relies on the requirement that $qW^{(0)} + (1-q)W^{(1)}$ must be causally ordered (or of the form $A\|B$). We emphasize though, that this reasoning only holds for conditioning with two outcomes; for three possible outcomes, it is straightforward to construct cases where, for example, the resulting $W^{(0)}$ is causally ordered $A\prec B$, while $W^{(1)}$ and $W^{(2)}$ are causally ordered $B\prec A$. This even holds true for purely classical processes, i.e., cases where all involved process matrices are diagonal in the same product basis. 

To see this, consider an arbitrary process matrix $W^{(0)}$ with causal ordering $A\prec B$ that is diagonal in the basis $\{\ket{i_{A_I}j_{A_O}k_{B_I}\ell_{B_O}}\}$, where $\ket{m_{X}}$ denotes an element of the computational basis of $\Hcal_X$. Now, choosing a (classical) process matrix $W^{(1)} = \ident_{A_O} \otimes  D_{A_IB_O} \otimes \rho_{B_I}$ with causal ordering $B\prec A$, where $D_{A_IB_O} = \sum_\ell \ketbra{\ell}{\ell}_{A_I} \otimes \ketbra{\ell}{\ell}_{B_O}$ is the Choi state of the completely dephasing map, and $\rho_{B_I}$ is an arbitrary state that is diagonal in the basis $\{\ket{k_{B_I}}\}$, we can find an appropriate $W^{(2)}$. As mentioned below the proof of Obs.~\ref{obs::OppCausOrd}, there always exists a $p \in (0,1]$ such that 
\begin{gather}
   \frac{\ident_{AB}}{d_{A_IB_I}} \geq pW^{(1)} = p \ \ident_{A_O} \otimes  D_{A_IB_O} \otimes \rho_{B_I}\, .
\end{gather}
Thus, $W^{(2)} := \tfrac{1}{1-p}(\tfrac{\ident_{AB}}{d_{A_OB_O}} - pW^{(1)})$, is a proper process matrix (with causal order $B\prec A$) and we see that 
\begin{gather}
\begin{split}
    \Upsilon_{ABC_I} \! =& q W^{(0)}\! \otimes\! \ketbra{0}{0}_{C_I} \!+\! (1-q) p W^{(1)} \!\otimes\! \ketbra{1}{1}_{C_I}\\ 
    &+ (1-q) (1-p) W^{(2)} \otimes \ketbra{2}{2}_{C_I}
\end{split}
\end{gather}
is a properly causally ordered comb (with order $A\prec B\prec C$), as it is positive and satisfies
\begin{gather}
    \tr_{C_I}\Upsilon_{ABC_I} = qW^{(0)} + (1-q)\frac{\ident_{AB}}{d_{A_IB_I}}\, .
\end{gather}
Conditioning the process $\Upsilon_{ABC_I}$ on outcome $0$ when measuring $C_I$ the yields $W^{(0)}$, which, by assumption is of causal order $A\prec B$, while conditioning on $1$ yields $W^{(1)}$, which, by construction, is of causal order $B\prec A$. Finally, conditioning on outcome $2$ yields the process matrix $W^{(2)}$, which is also of causal order $B\prec A$.

Allowing for more than two outcomes also admits a direct connection to Obs.~\ref{obs::OppCausOrd}, as it enables one to mimic measurements in two different bases by means of one single instrument. For example, choosing a generalized measurement with corresponding POVM elements 
\begin{gather}
    \begin{split}
& E^{(0)} = \tfrac{\sqrt{2}}{1+\sqrt{2}} \ketbra{0}{0}_{C_I}, \quad E^{(1)} = \tfrac{\sqrt{2}}{1+\sqrt{2}} \ketbra{+}{+}_{C_I}, \\
&E^{(2)} = \ident_{C_I} - E^{(0)} - E^{(2)},
    \end{split}
\end{gather}
it is possible to condition on both $\ket{0}_{C_I}$ and $\ket{+}_{C_I}$ with a single measurement setting -- as considered in the proof of Obs.~\ref{obs::OppCausOrd}. This, then, possibly leads to conditioned processes with opposing causal order, with the caveat that there is an additional third outcome, which, as long as $\ket{0}_{C_I}$ and $\ket{+}_{C_I}$ yield proper process matrices, corresponds to a proper process matrix as well. Additionally, similar to the discussion below Obs.~\ref{obs::OppCausOrd}, this realisation of opposing causal orders is inherently probabilistic, as there is always one additional (third) outcome that leads to a process of indefinite causal order.

Besides only applying to two outcomes, the reasoning that led to Obs.~\ref{obs::nogo} necessarily \textit{only} holds if the employed circuit has a definite causal order; here, the difference between the quantum switch and our procedure becomes apparent once more; discarding the control qubit of the quantum switch leaves the remaining degrees of freedom in a convex mixture of opposing causal orders. This is in contrast to the above reasoning, where we employed the fact that tracing out Charlie's degrees of freedom yields a causally ordered process whenever the underlying process $\Upsilon_{ABC_I}$ is causally ordered. Consequently, using a quantum switch allows one to condition onto two opposing causal orders by means of one basis -- and two outcomes -- only, a feat not possible for causally ordered $\Upsilon_{ABC_I}$.
\section{Conclusions and Outlook}

The exotic nature and theoretic appeal of causally indefinite processes is undeniable. However, their foundational and practical importance is still under debate. Here, by focusing on physically realisable processes, we have elucidated the ontological status of implementation schemes of causal indefiniteness by connecting them to causally ordered processes without non-classical correlations (in the relevant splitting) via a conditioning scheme. In addition, we have constructed causally ordered tri-partite processes that lead to a causally indefinite process for \textit{any} conditioning of the third party. Finally, building upon these methods we have demonstrated an analogue of the delayed-choice (thought) experiment for causal orders. Our results add to the growing body of work that underlines the foundational importance of causally indefinite process matrices, and they show that the list of exotic quantum phenomena is yet to be fully mapped out.

Our work highlights striking basis-dependent and basis-independent features of causality in quantum mechanics. Concretely, we have shown that causal order can be basis-dependent (in a precise sense): Conditioning in two different bases can lead to process matrices that have opposing causal orders. Importantly, this basis-dependence of causal ordering can be implemented deterministically, such that the choice of conditioning instrument also allows for choosing the observed causal order. This unresolvability of the chicken-and-egg dilemma in quantum mechanics~\cite{choaug_17_quantum_2018, *ChickEgg} has been studied in the context of the quantum switch. There, however, the process itself is not causally separable and the reduced process, i.e., the process when the degrees of freedom $C_I$ are discarded corresponds to a convex mixture of opposing causal orders. Here, we have demonstrated here that there are cases where this chicken-and-egg dilemma cannot be decided even under the assumption of global causal order. We showed that this, however, can only occur if genuine quantum correlations between the relevant degrees of freedom and the conditioning degrees of freedom are present in the conditioning comb. Naturally, such an effect is not at odds with special relativity, as it only holds in a conditioning sense, but not if the respective degrees of freedom of $C_I$ are discarded.

This phenomenon can be thought of as a variant of the delayed-choice experiment and warrants an analysis in the device independent setting~\cite{PhysRevLett.120.190401}. These results also complement those of~\cite{pienaar_time-reversible}, where the effect of a restriction of the possible instruments on the perceived causal order was studied. On the other hand, in contrast to, for example, the quantum switch, it is not possible to use a causally ordered comb to condition onto two opposing causal orders by means of only one instrument with two outcomes. Our results thus complement similar findings for the unconditional case~\cite{yokojima_consequences_2020,costa_no-go_2020}.

Furthermore, we analysed the `robustness' of causally non-separable process matrices with respect to the choice of conditioning basis. Specifically, we showed that adding entanglement between $AB$ and $C_I$, or, equivalently, adding coherent control over the conditioned process matrices, while still keeping the resulting comb properly causally ordered, and ensuring that all conditioning leads to proper process matrices, can lead to scenarios where conditioning in \textit{any} basis yields a causally non-separable process matrix. In addition, the explicit example we provided displayed some resistance against noise in the conditioning process, making it, in principle, amenable to experimental testing.

While for the deterministic implementation of opposite causal orders, entanglement in the splitting $AB:C_I$ is a necessary prerequisite, it is not \textit{a priori} clear if this is also the case for the stability advantage in the realisation of causal non-separability; in our analysis, all the causally ordered processes that yielded a stability advantage over the classically correlated case in\eq~\eqref{eqn::WocbIncoherent} were entangled in the splitting $AB:C_I$, but it is unclear if entanglement is indeed responsible for this advantage; in principle, there could be separable causally ordered processes that yield causally non-separable process matrices $W(q,\vartheta)$ in \textit{any} conditioning basis. However, we conjecture that there is, again, an interconversion of properties, and entanglement is necessary for full stability with respect to the conditioning basis.

Lastly, it is as of yet unclear how generic the property of full stability is with respect to the choice of measurement basis. Answering this question is hindered by the fact that a randomly chosen causally ordered comb $\Upsilon_{ABC_I}$ does not generally yield a proper process matrix on $AB$ when conditioned on measurements on $C_I$. More precisely, as \textit{any} positive matrix $M$ on $AB$ can be `realised'~\footnote{As mentioned, for matrices that are not proper processes, the conditioning probability depends on the employed instruments and the conditioning procedure is somewhat}ill-defined. by means of a causally ordered $\Upsilon_{ABC_I}$, the probability to realise proper process matrices is vanishing for a randomly chosen $\Upsilon_{ABC_I}$. Consequently, results with respect to the prevalence of fully stable combs have to be deferred to future work.

 \vspace{10pt}
\acknowledgments{We thank Jessica Bavaresco and Jacques Pienaar for valuable discussions, and Johanna Sch\"afer for illustratorial assistance. SM acknowledges funding from the Austrian  Science  Fund  (FWF):  ZK3  (Zukunftkolleg) and  Y879-N27  (START  project), the  European  Union's Horizon 2020 research and innovation programme under the Marie Sk{\l}odowska Curie grant agreement No 801110, and the Austrian Federal Ministry of Education, Science and Research (BMBWF). KM is supported through Australian Research Council Future Fellowship FT160100073.}
\FloatBarrier
\bibliographystyle{prx_bib}
\bibliography{references.bib}
 
\section*{Appendices} 
\appendix



\section{SDP for causal robustness}
\label{app:CausRob}

Here, we provide the SDP for the computation of the causal robustness, that is used throughout the paper. To this end, we first express the definition of causal robustness (Eq.~\eqref{eqn::Robustness}) as
\begin{flushleft}
\begingroup
\renewcommand{\arraystretch}{1.25}
\begin{tabular}{ l l } 
 \textbf{minimize:} & $s$ \\ 
 \textbf{subject to:} & $\tfrac{W+ sW'}{1+s} = pW^{A\prec B} + (1-q)W^{B\prec A}$,\\  
  &$W^{A\prec B}  = {}_{B_O}W^{A\prec B}$, \\
  &  ${}_{B_OA_O} W^{A\prec B} = {}_{B_OB_IA_O} W^{A\prec B}$,\\
  &  $ W^{B\prec A} = {}_{A_O} W^{B\prec A}$, \\
  & ${}_{A_OB_O} W^{B\prec A} = {}_{A_OA_IB_O} W^{B\prec A}$,\\
  & $L_V(W') = W'$,\\
  & $s, W^{A\prec B},  W^{B\prec A}, W' \geq 0$, $p \in [0,1]$ \\
  &$\tr W^{A\prec B} =  \tr W^{B\prec A} = \tr W' = d_{A_O}d_{B_0}$\, , 
\end{tabular}
\endgroup
\end{flushleft}
where we have introduced the projector 
\begin{gather}
\begin{split}
    L_V(W) &= {}_{A_O}W + {}_{B_O}W - {}_{A_OB_O}W - {}_{B_IB_O}W \\
    &\phantom{=} + {}_{A_OB_IB_O}W - {}_{A_IA_O}W + {}_{A_IA_OB_O}W\, ,
\end{split}    
\end{gather}
and the operators ${}_XW = \tfrac{1}{d_X} \ident_X \otimes \tr_X(W)$. The requirements of the above program on $W^{A\prec B}$ and $W^{B\prec A}$ ensure that they are causally ordered -- i.e., satisfy Eqs.~\eqref{eqn::Causality_constraints} -- while the requirements on $W'$ ensure that it is a proper process matrix -- i.e, satisfies\eq~\eqref{eqn::Def_W} (see Ref.~\cite{araujo_witnessing_2015} for more details). In the form presented above, this program is not yet an SDP, but can be straightforwardly rewritten into one.

Setting $\widetilde{W}^{A\prec B} = (1+s)pW^{A\prec B}$, $\widetilde{W}^{B\prec A} = (1+s)(1-p)W^{B\prec A}$ and using $sW'\geq 0$, the first line of the above program can be rewritten as 
\begin{gather}
    \widetilde W^{A\prec B} + \widetilde W^{B\prec A} - W \geq 0\, .
\end{gather}
With this, $\Ccal_R(W)$ can then be obtained as the solution of the SDP
\begin{flushleft}
\begingroup
\renewcommand{\arraystretch}{1.25}
\begin{tabular}{ l l } 
 \textbf{minimize:} &$\tfrac{1}{d_{A_O}d_{B_O}}\tr(\widetilde W^{A\prec B} + \widetilde W^{B\prec A})- 1$ \\ 
 \textbf{subject to:} & $\widetilde W^{A\prec B} + \widetilde W^{B\prec A} - W \geq 0$,\\  
  &$\widetilde{W}^{A\prec B}  = {}_{B_O}\widetilde{W}^{A\prec B}$, \\
  &  ${}_{B_OA_O} \widetilde W^{A\prec B} = {}_{B_OB_IA_O} \widetilde W^{A\prec B}$,\\
  &  $ \widetilde W^{B\prec A} = {}_{A_O} \widetilde W^{B\prec A}$, \\
  & ${}_{A_OB_O} \widetilde W^{B\prec A} = {}_{A_OA_IB_O} \widetilde W^{B\prec A}$,\\
  & $\widetilde W^{A\prec B}, \widetilde W^{B\prec A} \geq 0$\, ,  \\
\end{tabular}
\endgroup
\end{flushleft}
which is the SDP used throughout for the computation of $\Ccal_R(W)$.
\section{Valid \texorpdfstring{$F$}{}-terms in \texorpdfstring{$\Upsilon^F_{ABC_I}$}{}}
\label{app::Fterms}

Here, we derive the requirements on the $F$ terms in 
\begin{gather}
 \label{eqn::WF_Appendix}
    \begin{split}
 \Upsilon^F_{ABC_I} &= qW^{(\mathrm{OCB})} + (1-q)W^\# \\
  &\phantom{=}+ 2\sqrt{q(1-q)} (e^{-\iu \vartheta} F + e^{\iu \vartheta} F^\dagger)   
  \end{split}
\end{gather}
mentioned in the main text. Specifically, there are two conditions on $\Upsilon^F_{ABC_I}$ -- leading to the corresponding requirements for $F$ that need to be fulfilled. First, $\Upsilon^F_{ABC_I}$ must be positive, so that it is a proper causally ordered process (the causality constraints are satisfied by construction). Second, all conditioned process matrices obtained from $\Upsilon^F_{ABC_I}$ must be proper process matrices, i.e., they must satisfy\eq~\eqref{eqn::Def_W}. We start with positivity. 

\subsection{Positivity of \texorpdfstring{$\Upsilon^F_{ABC_I}$}{}}
\label{app::Positivity}
Using the eigendecompositions for $W^{(\mathrm{OCB})} = \tfrac{1}{2} \sum_{i = 1}^8 \ketbra{\Psi_i}{\Psi_i}$ and $W^\# = \tfrac{1}{2} \sum_{j = 1}^8 \ketbra{\Psi_j^\perp}{\Psi_j^\perp}$,\eq~\eqref{eqn::WF_Appendix} reads 
\begin{align}
\notag
 &\Upsilon^F_{ABC_I} \\
 \notag
     &= \sum_{i,j=1}^8(\tfrac{1}{2} \ketbra{\Psi_i}{\Psi_i} \otimes \ketbra{0}{0}_{C_I}  + \tfrac{1}{2}\ketbra{\Psi_j^\perp}{\Psi_j^\perp} \otimes \ketbra{1}{1}_{C_I} \\
\label{eqn::WF_app}
     &\phantom{=}+ F\otimes \ketbra{0}{1}_{C_I} + F^\dagger \otimes \ketbra{1}{0}_{C_I})\, .
\end{align}
Now, projection on a vector $\ket{\Psi_k}$ yields
\begin{gather}
\begin{split}
    &\braket{\Psi_k|\Upsilon^F_{ABC_I}|\Psi_k} \\
    &= \tfrac{1}{2} \ketbra{0}{0}_{C_I} + f_k \ketbra{0}{1}_{C_I} + f_k^\ast \ketbra{1}{0}_{C_I}\, ,
    \end{split}
\end{gather}
where $f_k = \braket{\Psi_k|F|\Psi_k} $. In matrix form, the above equation reads 
\begin{gather}
\label{eqn::cd_terms_app}
    \braket{\Psi_k|\Upsilon^F_{ABC_I}|\Psi_k} = \left( \begin{array}{cc} 
    \tfrac{1}{2} & f_k \\ f_k^\ast & 0 
    \end{array} \right)\, ,
\end{gather}
which has eigenvalues $\lambda_{\pm} = \tfrac{1}{2} (\frac{1}{2} \pm \sqrt{\tfrac{1}{4} + 4|f_k|^2})$. For $\Upsilon^F_{ABC_I}$ to be positive, we thus require that $f_k = 0$ for all $k\in \{1,\hdots,8\}$. Running the same  argument for the eigenvectors of $W^\#$ shows that $F$ cannot contain any terms of the form $\ketbra{\Psi_j^\perp}{\Psi_j^\perp}$, implying that it is of the form 
\begin{gather}
F = \sum_{i,j=1}^8 (c_{ij} \ketbra{\Psi_i}{\Psi_j^\perp} + d_{ij} \ketbra{\Psi_i^\perp}{\Psi_j}) \, ,
\end{gather}
with $c_{ij}, d_{ij} \in \mathbbm{C}$. This also implies $\tr F = 0$, as mentioned in the main text. Furthermore, we can show that $d_{ij} = 0$ is necessary for $\Upsilon^F_{ABC_I}$ to be positive. To this end, we insert\eq~\eqref{eqn::cd_terms_app} into\eq~\eqref{eqn::WF_app}, which yields

\begin{align}
\notag
     &\Upsilon^F_{ABC_I} \\
     \notag
     &= \sum_{i,j=1}^8(\tfrac{1}{2} \ketbra{\Psi_i}{\Psi_i} \otimes \ketbra{0}{0}_{C_I}  + \tfrac{1}{2} \ketbra{\Psi_j^\perp}{\Psi_j^\perp} \otimes \ketbra{1}{1}_{C_I}) \\
     \notag
     &\phantom{=} +\sum_{i,j=1}^8(c_{ij} \ketbra{\Psi_i}{\Psi_j^\perp}  + d_{ij}\ketbra{\Psi_i^\perp }{\Psi_j})  \otimes \ketbra{0}{1}_{C_I} \\
     \label{eqn::WF_coeff_app}
     &\phantom{=}+ \sum_{i,j=1}^8 (c_{ij}^\ast \ketbra{\Psi_j^\perp}{\Psi_i} + d^\ast_{ij} \ketbra{\Psi_j}{\Psi_i^\perp}) \otimes \ketbra{1}{0}_{C_I} \, .
\end{align}
Now, collecting the terms with coefficients $d_{ij}$, we set 
\begin{gather}
    \begin{split}
    D :=  \sum_{i,j=1}^8&(d_{ij} \ketbra{\Psi_i^\perp}{\Psi_j} \otimes \ketbra{0}{1}_{C_I} \\
    &\phantom{asdf}+  d_{ij}^\ast \ketbra{\Psi_j}{\Psi_i^\perp} \otimes \ketbra{1}{0}_{C_I})\, , 
\end{split}
\end{gather}
and denote the remaining terms by $G$, such that $\Upsilon^F_{ABC_I} = D + G$. We have $D\cdot G = 0$ and $\tr(D) = 0$. As $D$ is Hermitian, it has real eigenvalues, and as $\tr(D) = 0$, at least one of these eigenvalues is negative (unless  $D = 0$). Consequently, since the supports of $D$ and $G$ are orthogonal, $D + G$ has at least one negative eigenvalue if $D \neq 0$, in which case $\Upsilon^F_{ABC_I} \ngeq 0$, which contradicts our initial requirement. This implies that all $d_{ij}$ vanish when $\Upsilon^F_{ABC_I}$ is positive. Note that an analogous reasoning does \textit{not} hold for the coefficients $c_{ij}$. Denoting the terms in\eq~\eqref{eqn::WF_coeff_app} that contain the coefficients $c_{ij}$ by $H$, and the remaining ones by $K$ (such that $\Upsilon^F_{ABC_I} = H + K$), it is easy to see that the supports of $H$ and $K$ are not necessarily orthogonal, and the above reasoning for $\{d_{ij}\}$ would not carry over to $\{c_{ij}\}$.

We will return to the explicit positivity conditions when imposing that $\Upsilon^F_{ABC_I}$ is a proper process matrix below, after first further reducing the number of non-vanishing parameters $\{c_{ij}\}$. 

\subsection{\texorpdfstring{$F$}{}-terms and valid conditioned process matrices}
\label{app::validF}
In principle, conditioning allows for the realisation of any type of `process', valid (i.e., satisfying\eq~\eqref{eqn::Def_W}) or not. Naturally, here, we demand that conditioning leads to a proper process matrix, independent of the conditioning basis. While the linear requirements (besides positivity) on a matrix $W$ to be a proper process matrix can be phrased in a basis independent way~\cite{araujo_witnessing_2015}, we choose the characterization in terms of Pauli matrices provided in Ref.~\cite{OreshkovETAL2012}. Specifically, since a process matrix is Hermitian (and, in our case, defined on a four-qubit Hilbert space), it can be decomposed in terms of a Pauli basis as 
\begin{gather}
    W = \sum_{\alpha, \beta, \gamma,\mu=0}^3 w_{\alpha \beta \gamma\mu} \sigma^\alpha_{A_I} \otimes \sigma_{A_O}^\beta \otimes \sigma_{B_I}^\gamma \otimes \sigma_{B_O}^\mu\, ,
\end{gather}
where $\sigma^0_{X} = \ident_X$,  $\sigma^1_{X} = \sigma^x_X$, $\sigma^2_{X} = \sigma^y_X$, and $\sigma^3_{X} = \sigma^z_X$. Due to normalization, we have  $w_{0000} = \tfrac{1}{d_{A_I}d_{A_0}}$. Now, in order for $W$ to be a proper process matrix, it has to be positive, and certain terms in the above decomposition cannot be present. In particular, denoting the respective terms by the Hilbert spaces on which they have non-trivial Pauli matrix (e.g., a term of the form  $\sigma^x_{A_I} \otimes \ident_{A_O} \otimes \sigma_{B_I}^z \otimes \sigma_{B_O}^y$ would be an $A_IB_IB_O$ term), it has been shown~\cite{OreshkovETAL2012} that terms of the form 
\begin{gather}
    \begin{split}
    \Sigma_{NA} = \{&A_O,B_O, A_OB_O, A_IA_O, B_IB_O, A_IA_OB_O, \\
    &A_OB_IB_O, A_IA_OB_IB_O \}
\end{split}
\end{gather}
are not allowed in the decomposition of $W$. 

As both $W^{(\mathrm{OCB})}$ and $W^\#$ do not contain any terms that are not allowed, neither can $F$, which we denote by the shorthand $\tr(F\sigma^{\Gamma}_{NA})=0$ for all $\sigma^{\Gamma}_{NA} \in \Sigma_{NA}$. It is easy to see that the index $\Gamma$ runs from $1$ to $168$, i.e., there are altogether $168$ Pauli terms that cannot appear in a proper process matrix (defined on a four qubit Hilbert space). With this, we can derive the conditions the parameters $c_{ij}$ have to satisfy for the conditioned process matrices to be proper ones. In particular, setting 
\begin{gather}
    r_{ij}^\Gamma = \tr(\sigma_{NA}^{\Gamma} \ketbra{\Psi_i}{\Psi_j^\perp})\, ,
\end{gather}
we see that the requirement that no Pauli term that is not allowed appears in the decomposition of $F$ leads to 
\begin{gather}
\label{eqn::LinEq_app}
    \sum_{i,j=1}^8 c_{ij} r_{ij}^\Gamma = 0 \quad \forall \ \Gamma\in \{1,\hdots, 168\}\, .
\end{gather}
This linear equation can be readily solved to determine the coefficients $\{c_{ij}\}$. To avoid ambiguity, we explicitly provide the eigenvectors of $W^\mathrm{(OCB)}$ and $W^\#$ as well as the ordering we choose: 
\begin{align}
&\ket{\Psi_1} = \tfrac{1}{4-2\sqrt{2}}[(\sqrt{2}-1)\ket{1101} + \ket{1111}]\, ,\\
&\ket{\Psi_2} =  \tfrac{1}{4-2\sqrt{2}}[(1-\sqrt{2})\ket{1100} + \ket{1110}]\, , \\
&\ket{\Psi_3} = \tfrac{1}{4+2\sqrt{2}}[(1 + \sqrt{2})\ket{1001} +\ket{1011})]\, ,\\
&\ket{\Psi_4} =  \tfrac{1}{4+2\sqrt{2}}[\ket{1010}-(1 + \sqrt{2}) \ket{1000}]\, ,\\
&\ket{\Psi_5} = \tfrac{1}{4-2\sqrt{2}}[(1 - \sqrt{2})\ket{0101} + \ket{0111}]\, ,\\
&\ket{\Psi_6} = \tfrac{1}{4-2\sqrt{2}}[(\sqrt{2}-1)\ket{0100} + \ket{0110}\, ,\\
&\ket{\Psi_7} = \tfrac{1}{4+2\sqrt{2}} [\ket{0011} -(1+\sqrt{2}) \ket{0001}]\, ,\\
&\ket{\Psi_8} = \tfrac{1}{4+2\sqrt{2}}[(1 + \sqrt{2})\ket{0000} + \ket{0010}]\, ,\\
\notag \\
&\ket{\Psi_1^\perp} = \tfrac{1}{4-2\sqrt{2}}[( \sqrt{2}-1)\ket{1101}+ \ket{1111}]\, ,\\
&\ket{\Psi_2^\perp} = \tfrac{1}{4+2\sqrt{2}}[(1 +\sqrt{2})\ket{1100} + \ket{1110}]\, , \\
&\ket{\Psi_3^\perp} = \tfrac{1}{4-2\sqrt{2}}[(1 - \sqrt{2}) \ket{1001} + \ket{1011}]\, ,\\
&\ket{\Psi_4^\perp} = \tfrac{1}{4-2\sqrt{2}}[( \sqrt{2}-1)\ket{1000} +\ket{1010}]\, ,\\
&\ket{\Psi_5^\perp} = \tfrac{1}{4+2\sqrt{2}}[(1 + \sqrt{2})\ket{0101} +\ket{0111}]\, ,\\
&\ket{\Psi_6^\perp} = \tfrac{1}{4+2\sqrt{2}}[\ket{0110} - (1 + \sqrt{2})\ket{0100}]\, ,\\
&\ket{\Psi_7^\perp} = \tfrac{1}{4-2\sqrt{2}}[(\sqrt{2}-1)\ket{0001} + \ket{0011}]\, ,\\
&\ket{\Psi_8^\perp} = \tfrac{1}{4-2\sqrt{2}}[(1 - \sqrt{2})\ket{0000} + \ket{0010}]\, .
\end{align}
With this ordering in mind, solving\eq~\eqref{eqn::LinEq_app} yields three free parameters -- we choose  $\{c_{11},c_{15}, c_{51}\}$ -- and
\begin{gather}
\label{eqn::Coefficients_app}
\begin{array}{llll}
c_{22} = -c_{11},\ &c_{26} = -c_{15}, \
&c_{37} = c_{15},\ &c_{44} = -c_{11}, \\
c_{48} = -c_{15},\ &c_{55} = -c_{11},\
&c_{62} = -c_{51},\ &c_{66} = c_{11},\\  
c_{73} = c_{51},\ &c_{77} = -c_{11},\ 
&c_{84} = -c_{51},\ &c_{88} = c_{11}\, ,
\end{array}
\end{gather}
while all other coefficients vanish. Each choice of coefficients $\{c_{11},c_{15}, c_{51}\}$ then provides a proper conditioned process matrix independent of the basis with respect to which conditioning takes place, as long as the remaining coefficients are computed according to Eqs.~\eqref{eqn::Coefficients_app}, and the corresponding $\Upsilon^F_{ABC_I}$ is positive.
 
\subsection{Positivity of \texorpdfstring{$\Upsilon^F_{ABC_I}$}{} revisited}
\label{app::Pos_Rev}
Having reduced the number of non-vanishing coefficients $\{c_{ij}\}$, we can now find the explicit ranges, for which they lead to positive (and thus valid) process matrices $\Upsilon^F_{ABC_I}$. Inserting the conditions~\eqref{eqn::Coefficients_app} into the definition~\eqref{eqn::WF_Appendix} of $\Upsilon^F_{ABC_I}$, we can compute the eigenvalues of $\Upsilon^F_{ABC_I}$ with respect to $\{c_{11},c_{15},c_{51}\}$. The smallest of these eigenvalues reads 
\begin{gather}
    \lambda_{\mathrm{min}} = \tfrac{1}{4} - \tfrac{1}{\sqrt{2}} \sqrt{N + \sqrt{N^2 - 4|P|^2}}\,,
    \end{gather}
 where $N =   2|c_{11}|^2 + |c_{15}|^2 + |c_{51}|^2$ and  $P = c_{11}^2 + c_{15}c_{51}$. Demanding $\lambda_{\mathrm{min}} \geq 0$ then yields the requirement 
 \begin{gather}
 \label{eqn::Nsqrt}
     N + \sqrt{N^2 - 4|P|^2} \leq \tfrac{1}{8}\, .
 \end{gather}
For the special case of $c_{15} = c_{51} = 0$, this implies
\begin{gather}
    |c_{11}| \leq \tfrac{1}{4}\, .
\end{gather}
On the other hand, if $c_{11} = c_{15} = c_{51} =: c$, then\eq~\eqref{eqn::Nsqrt} implies
\begin{gather}
    |c| \leq \tfrac{1}{4\sqrt{2}}\, ,
\end{gather}
as mentioned in the main text. Furthermore, under the assumption $c_{15} = e^{\iu \varphi_1} c_{11}$ and $c_{51} = e^{\iu \varphi_2} c_{11}$, we have 
\begin{gather}
    |c_{11}| \leq \tfrac{1}{\sqrt{32+8\sqrt{8(1-\cos(\varphi_1+\varphi_2))}}}\, .
\end{gather}

For the general case $c_{11} \neq c_{15} \neq c_{51}$,\eq~\eqref{eqn::Nsqrt} yields 
\begin{gather}
\begin{split}
&\tfrac{1}{64}\\
&\geq \tfrac{1}{4}( 2|c_{11}|^2 + |c_{15}|^2 + |c_{51}|^2) - 4|c_{11}^2 + c_{15}c_{51}|^2 \, .
 \end{split}
\end{gather}
Choosing the three parameters $\{c_{11},c_{15},c_{51}\}$ such that they satisfy the above equation (as is done throughout the paper), then yields proper process matrices $\Upsilon^F_{ABC_I}$, and as such proper conditioned process matrices $W(q,\vartheta)$ for all choices of $q$ and $\vartheta$.

\begin{figure}
    \vspace{0.2cm}
    \centering
    \includegraphics[width=0.9\linewidth]{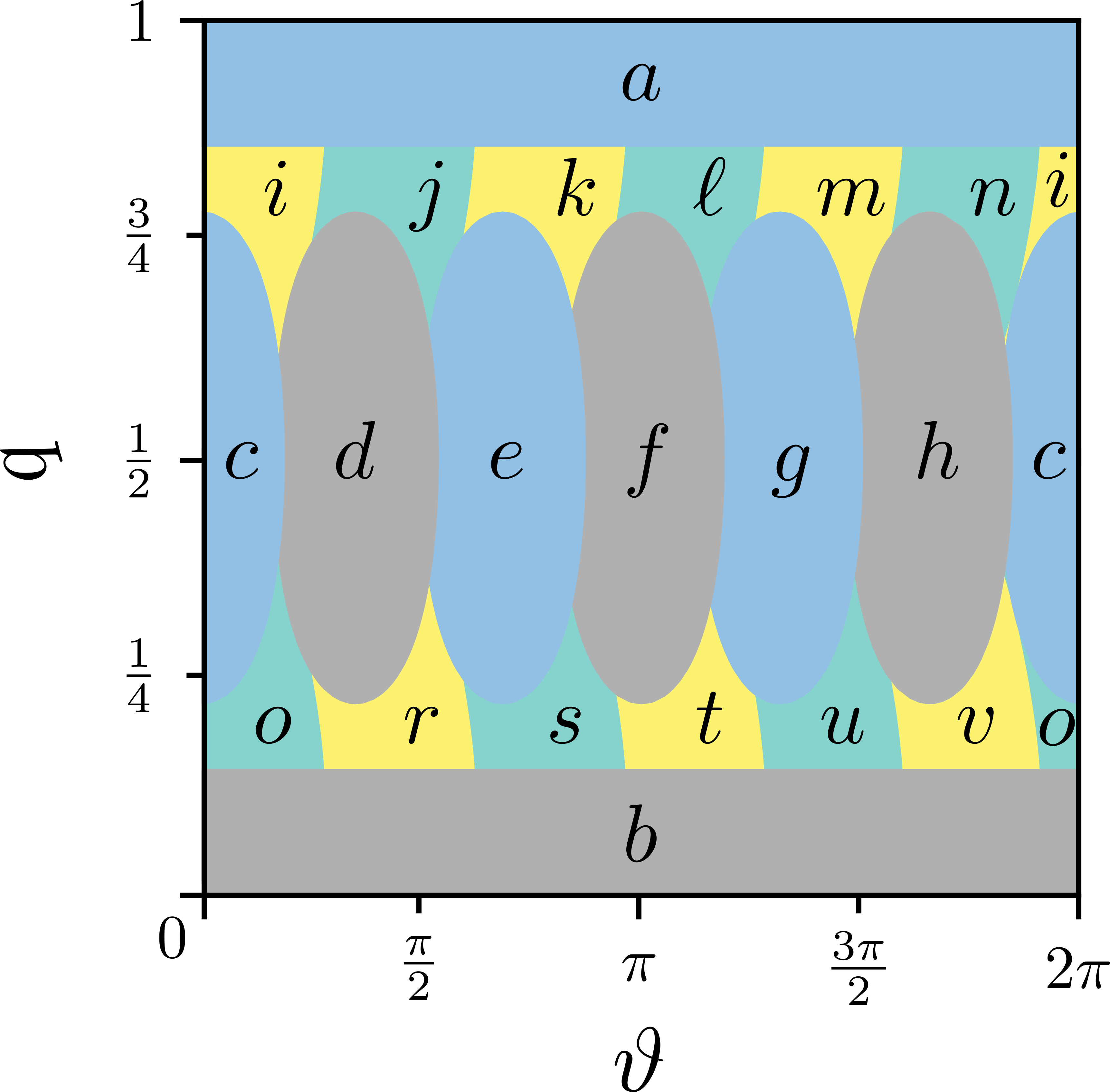}
    \caption{\textbf{Covering the parameter space $(q,\vartheta)$ with witnesses for causal non-separability}. Each of the coloured regions corresponds to an area that is witnessed by a different $S_i$. Depicted are, respectively, not the whole regions the witnesses detect, but only the area necessary to cover the whole parameter space. See Tab.~\ref{tab::Witness_list} for a list of the witnesses each area corresponds to.}
    \label{fig::Covering}
\end{figure}

\begin{table}
\parbox[t]{.3\linewidth}{
\centering
\begin{tabular}[t]{c|c|c}
$A_\alpha$ & $q_\alpha$ & $\vartheta_\alpha$ \\ 
\hline
$a$ & $1$ & $0$ \\ 
\hline
$b$ & $0$ & $0$ \\ 
\hline
$c$ & $0.5$ & $0$ \\ 
\hline
$d$ & $0.5$ & $1.1$ \\ 
\hline
$e$ & $0.5$ & $2.15$\\
\hline
$f$ & $0.5$ & $\pi$ \\ 
\hline
$g$ & $0.5$ & $4.13$ 
\end{tabular}}
\hfill
\parbox[t]{.3\linewidth}{
\centering
\begin{tabular}[t]{c|c|c}
$A_\alpha$ & $q_\alpha$ & $\vartheta_\alpha$ \\ 
\hline
$h$ & $0.5$ & $5.2$ \\ 
\hline
$i$ & $0.74$ & $0$ \\ 
\hline
$j$ & $0.74$ & $1.075$ \\
\hline
$k$ & $0.74$ & $2.15$ \\ 
\hline
$\ell$ &  $0.74$ & $\pi$ \\ 
\hline
$m$ &  $0.74$ & $4.13$ \\  
\hline
$n$ & $0.74$ & $5.11$  
\end{tabular}}
\hfill
\parbox[t]{.3\linewidth}{
\centering
\begin{tabular}[t]{c|c|c}
$A_\alpha$& $q_\alpha$ & $\vartheta_\alpha$ \\
\hline
$o$ & $0.26$ & $0$\\
\hline
$r$ & $0.26$ & $1.075$ \\ 
\hline
$s$ & $0.26$ & $2.15$ \\ 
\hline
$t$ & $0.26$ & $\pi$ \\ 
\hline
$u$ & $0.26$ & $4.13$ \\
\hline
$v$ & $0.26$ & $5.11$
\end{tabular}}
\caption{\textbf{Witnesses used in Fig.}~\ref{fig::Covering}. Each area in Fig.~\ref{fig::Covering} corresponds to the range of parameters for which the causal non-separability of $W(q,\vartheta)$ is detected by the same witness. The employed witnesses $S$ are, respectively, the ideal witnesses for given conditioned process matrices $W(q_*,\vartheta_*)$, i.e., they are proper witnesses and minimize $\tr(SW(q_*,\vartheta_*)$. In the table, the values $(q_*,\vartheta_*)$ which fix the witnesses are listed.}
\label{tab::Witness_list}
\end{table}

\section{Causal non-separability of \texorpdfstring{$W(q,\vartheta)$}{}}
\label{app::Allnonsep}

Here, we show that for the choice $\{c_{11},c_{15},c_{51}\} = \{\tfrac{1}{8},\tfrac{1}{8},\tfrac{1}{8}\}$, \textit{all} resulting process matrices $W(q,\vartheta)$ are causally non-separable. While it is generally hard to analytically compute the causal robustness of a given process matrices, its causal non-separability can be -- just like in the analogous case of entanglement -- determined by means of witnesses~\cite{araujo_witnessing_2015}. These witnesses $S$ are constructed such that if $\tr(SW)<0$, then $W$ is causally non-separable. In~\cite{araujo_witnessing_2015}, it was shown that a witness $S$ of causal non-separability (for two parties) satisfies 
\begin{gather}
\label{eqn::Witness}
S = L_V(S_P) \text{ and } \ident/d_{A_O}d_{B_O} - S = L_V(\Sigma_P),
\end{gather}
where ${}_{A_O}S_P \geq 0,$ ${}_{A_O}S_P \geq 0$, and $\Sigma_P \geq 0$. With this, for any fixed pair $(q_\alpha,\vartheta_\alpha)$ to compute an optimal witness $S_\alpha$ for a conditioned process matrix $W(q_\alpha,\vartheta_\alpha)$  via an SDP~\cite{araujo_witnessing_2015}: 
\begin{flushleft}
\begingroup
\renewcommand{\arraystretch}{1.25}
\begin{tabular}{ l l } 
 \textbf{minimize:} &$\tr(SW(q_\alpha,\vartheta_\alpha))$ \\ 
 \textbf{subject to:} & $S$ is a proper witness of causal \\
 &non-separability (i.e., satisfies\eq~\eqref{eqn::Witness}).
\end{tabular}
\endgroup
\end{flushleft}
Naturally, if a witness $S_\alpha$ detects the causal non-separability of a process matrix $W(q_\alpha,\vartheta_\alpha)$, it can also detect the causal non-separability of process matrices $W(q,\vartheta)$ for parameters $(q,\vartheta)$ in a vicinity of $(q_\alpha,\vartheta_\alpha)$. This allows us to partition the whole parameter space $(q,\vartheta) \in [0,1]\times [0,2\pi]$ into a finite number of areas $A_\alpha$ such that the causal non-separability of each process matrix $W(q,\vartheta)$ with $(q,\vartheta) \in A_\alpha$ is detected by the same witness $S_\alpha$, respectively. To find a sufficient number of witnesses $\{S_\alpha\}$, we simply find the ideal witnesses $\{S_\alpha\}$ for given pairs $(q_\alpha,\vartheta_\alpha)$ by running the above SDP, compute the respective area, in which $\tr(S_\alpha W(q,\vartheta))<0$ holds, until $\bigcup_\alpha A_\alpha$ covers the whole parameter space $(q,\vartheta)$. 

Exemplarily, we explicitly provide the area $A_\alpha$ for three pairs $(q_\alpha,\vartheta_\alpha)$. We start with computing a witness $S_a$ for $W(q_a=1,\vartheta_a=0)$. The corresponding parameter area for which $S_a$ definitely detects causal non-separability is given by $A_a = [0.853553,1] \times [0,2\pi]$. Analogously, the witness $S_b$ for $W(q_b=0,\vartheta_b=0)$ detects causal non-separability for the region $A_0 = [0,0.146447] \times [0,2\pi]$. On the other hand, choosing $(q_f = \tfrac{1}{2},\vartheta_f = \pi)$ as a starting point, the requirement $\tr(S_f W(q,\vartheta))$ for the ideal witness $S_f$ of $W(q_f = \tfrac{1}{2},\vartheta_f = \pi)$ translates to
\begin{gather}
    0.25 + 0.604 \sqrt{(1 - q) q}\cos(\vartheta) < 0\, .
\end{gather}
The corresponding area in which the above inequality is satisfied is depicted in Fig.~\ref{fig::Covering}, where we also provide a complete partitioning of the full parameter space into $20$ areas $A_\alpha$ of parameters $(q,\vartheta)$ that lead to non-separable process matrices that can be detected by the same witness $S_\alpha$. The corresponding values $(q_\alpha,\vartheta_\alpha)$ for which the witnesses $S_\alpha$ are constructed can be found in Tab.~\ref{tab::Witness_list}.

\end{document}